\documentclass[11pt,a4paper]{article}%

\usepackage{amsmath}%
\usepackage{amsfonts}%
\usepackage{amssymb}%
\usepackage{graphicx}%
\usepackage{graphics}%
\usepackage[sort]{natbib}%
\usepackage{breqn}%
\usepackage{setspace}
\usepackage{amsthm}
\usepackage{mathtools}
\usepackage[top=1.7in, bottom=1.7in, left=0.9in, right=0.9in]{geometry}
\usepackage{bigints}
\usepackage{epstopdf}
\usepackage{subfigure}

\newcommand{\keywords}[1]{\par\addvspace\baselineskip\noindent{\em Some key words}:\enspace\ignorespaces#1}




\doublespacing

\title{Objective Bayesian modelling of insurance risks with the skewed Student-$t$ distribution}
\date{}
\author{Fabrizio Leisen, Juan-Miguel Marin and Cristiano Villa}
\pagestyle{myheadings}
\markright{}

\begin{document}

\newtheorem{example}{\textit{Example}}
\newtheorem{definition}{Definition}
\newtheorem{theorem}{Theorem}
\newtheorem{lemma}{Lemma}
\newtheorem{proposition}{Proposition}
\newtheorem{axiom}{Axiom}
\newtheorem{remark}{\textit{Remark}}


\maketitle

\begin{abstract}
Insurance risks data typically exhibit skewed behaviour. In this paper, we propose a Bayesian approach to capture the main features of these datasets.  This work extends the methodology introduced in \cite{Villa:Walker:2014a} by considering an extra parameter which captures the skewness of the data. In particular, a skewed Student-\textit{t} distribution is considered. Two datasets are analysed: the Danish fire losses and the US indemnity loss.  The analysis is carried with an objective Bayesian approach. For the discrete parameter representing the number of the degrees of freedom, we adopt a novel prior recently introduced in \cite{Villa:Walker:2014b}.  

\keywords{Skewed Student-\textit{t}distribution, Objective Bayes, Insurance Losses}
\end{abstract}

\section{Introduction}\label{sc_intro}
In this paper we discuss the application of a skewed-$t$ model suitable to capture skewness and kurtosis commonly present in insurance risks data. In particular, we introduce a Bayesian approach based on minimal informative prior distributions \citep{Villa:Walker:2014b} to estimate the parameters of the asymmetric Student-$t$ distribution (AST) introduced in \cite{Fernandez:Steel:1998} and re-proposed in \cite{Zhu:Galb:2010}.

Insurance risks data, for example insurance losses, tend to present skewed behaviour \citep{Lane:2000, Vernic:2006} in almost any circumstance. Furthermore, losses related to catastrophes (e.g. earthquakes, exceptional floods, etc)
are also better modelled by distributions with relatively heavy tails, as they quite often include extreme events. It is therefore appropriate to use a statistical model that can simultaneously account for data that is skewed and with values that may be extreme in behaviour. The generalised AST distribution, presented in Section \ref{sc_prelim}, includes a skewness parameter and allows to control the heaviness of both the left and right tails by means of a parameter representing the number of degrees of freedom. An obvious advantage of the AST, as it will be discussed, is that it allows to \textit{adjust} to the observed data, in the sense that can consider both skewed and non-skewed data and heavy-tailed data as well as observations that are suitable to be modelled by a distribution with tails like the one of the normal density. In other words, by estimating the parameter controlling the skewness of the distribution and the number of degrees of freedom, translates into a model selection scenario, where the competing models are: the normal distribution, the $t$ distribution, the skew normal and the skew $t$.

It is often important to consider cases where the prior information about the true parameter values is minimal. There are circumstances where the prior information is not available or, for some reasons, it is not reliable or practical to be used. As such, we consider an inferential scenario where the prior distributions for the parameters of the model are set up according to objective Bayesian criteria. Note that here, the term ``objective'' is not intended to represent an actually objective set up, but simply to categorise the procedure followed to derive the prior distributions as one that provides an output in a sort of an automated fashion \citep{Berger:2006}. The choice of the objective prior for the skewness parameter is the Jeffreys' prior, while for the location and the scale parameters we will adopt the reference prior. For the number of degrees of freedom we adopt the truncated discrete prior proposed in \cite{Villa:Walker:2014a}.

Although we do not discuss the possibility of using prior distributions elicited on the basis of reliable prior information, we still believe that it would be the most sensible way to proceed. Of course, the fact that inferential results can be effectively obtained by means of an objective Bayesian approach, it just strengthens the proposed model, as it allows to remove the (not always easy) task of translating prior information into prior probability distributions. \\

The paper is organised as follows. In Section \ref{sc_prelim} we introduce the asymmetric Student-$t$ distribution and discuss some of its properties. We also present the prior distributions used for the parameters and, in particular, the prior for the number of degrees of freedom, which we argue to have a support that is discrete and truncated. In Section \ref{sc_priors} we explicit the prior for the number of degrees of freedom and prove its independence from the skewness parameter. This result is key in motivating the prior for $\nu$ proposed by \cite{Villa:Walker:2014a}. Section \ref{sc_simul} is dedicated to present the results of a simulation study aimed to analyse the frequentist properties of the posterior distribution for the number of degrees of freedom. To complete the discussion of the proposed approach to model insurance loss data, in Section \ref{sc_real} we analyse two well known data sets. We are able to show the versatility of the model in a scenario of skewed data with extreme events and in a scenario where a symmetric normal distribution is sufficient to be used as a model. Final discussion points and conclusions are presented in Section \ref{sc_disc}.

\section{Preliminaries}\label{sc_prelim}

In this paper, we consider the asymmetric Student-$t$ (AST) distribution whose general density has the form
\begin{equation}\label{eq_ast1}
f(x|\alpha,\nu,\mu,\sigma) = \left\{
  \begin{array}{l l}
    \dfrac{K(\nu)}{\sigma}\left[1+\dfrac{1}{\nu}\left(\dfrac{x-\mu}{2\alpha\sigma}\right)^2\right]^{-\frac{\nu+1}{2}} & \quad x\leq\mu\\
    \dfrac{K(\nu)}{\sigma}\left[1+\dfrac{1}{\nu}\left(\dfrac{x-\mu}{2(1-\alpha)\sigma}\right)^2\right]^{-\frac{\nu+1}{2}} & \quad x>\mu
  \end{array} \right.
\end{equation}
where $\alpha\in(0,1)$ represents the skewness parameter, $\mu$ is the location parameter, $\sigma$ is the scale parameter, and $\nu$ represents the degrees of freedom, with
$$K(\nu)\equiv\Gamma((\nu+1)/2)/[\sqrt{\pi\nu}\Gamma(\nu/2)],$$
see \cite{Fernandez:Steel:1998} and \cite{Zhu:Galb:2010} for a detailed discussion of the distribution and its properties. The usual Student-$t$ distribution can be recovered by setting $\alpha=1/2$ in \eqref{eq_ast1}, while the skewed Cauchy and the skewed normal distributions are special cases obtained when $\nu=1$ and $\nu=+\infty$, respectively. 

In the Bayesian framework, the inference about the parameter of a model is accomplished by combining the prior uncertainty about the true value of the parameters, expressed in the form of probability distributions, and the information contained in the observed sample. The latter is expressed by the likelihood function which, for the model in \eqref{eq_ast1}, has the form
\begin{eqnarray}\label{eq_like1}
L(\alpha,\nu,\mu,\sigma|x) &=& \prod_{i=1}^n f(x_i|\alpha,\nu,\mu,\sigma) \nonumber \\
&=& \prod_{x_i\leq\mu}\dfrac{K(\nu)}{\sigma}\left[1+\dfrac{1}{\nu}\left(\dfrac{x_i-\mu}{2\alpha\sigma}\right)^2\right]^{-\frac{\nu+1}{2}} \times \prod_{x_i>\mu}\dfrac{K(\nu)}{\sigma}\left[1+\dfrac{1}{\nu}\left(\dfrac{x_i-\mu}{2(1-\alpha)\sigma}\right)^2\right]^{-\frac{\nu+1}{2}}.
\end{eqnarray}
Thus, if we indicate the joint prior distribution for the parameters by $\pi(\alpha,\nu,\mu,\sigma)$, the posterior distribution is given by
\begin{equation}\label{eq_prior_1}
\pi(\alpha,\nu,\mu,\sigma) \propto L(\alpha,\nu,\mu,\sigma|x)\times\pi(\alpha,\nu,\mu,\sigma).
\end{equation}
The prior distributions will be discussed in Section \ref{sc_priors}. However, we believe it is useful to give an overview of the overall approach in this section.

In this paper we adopt an objective Bayesian approach to make inference on the unknown parameters of the density in \eqref{eq_ast1}. To proceed, we assume that the tail parameter $\nu$ is discrete. That is, $\nu=1,2,\ldots$. The reason of choosing a discrete parameter space for $\nu$ is due to the close proximity of models with consecutive number of degrees of freedom. In line with what discussed for the $t$ density by \cite{Jacquier:2004} and \cite{Villa:Walker:2014a}, the amount of information from the observations (i.e. the sample size) will rarely be sufficient to discern between consecutive densities with a difference in the number of degrees of freedom smaller than one. As such, the choice of a discrete support for $\nu$, with intervals of size one, is sensible. It is still possible to define a discrete prior (with the same approach) over a more dense support; however, the choice has to be motivated by some prior evidence, such as the certainty of dealing with a very large sample size, and the resulting prior distribution will obviously be different as the amount of information would have changed.

\section{The prior distributions for the parameters of the AST model}\label{sc_priors}
In this section we outline the approach to derive the prior distributions for the parameters of the AST model. We dedicate most of this section to the prior for the number of degrees of freedom because, being a discrete parameter, presents some non-trivial challenges. We start by making the assumption that the parameters present, a priori, some degree of independence and, therefore, the prior has the form
\begin{equation}\label{eq_prior.1}
\pi(\alpha,\nu,\mu,\sigma) \propto \pi(\nu|\alpha,\mu,\sigma)\pi(\mu,\sigma)\pi(\alpha).
\end{equation}
As we will show in Section \ref{sc_nuprior}, the prior for the number of degrees of freedom does not depend on the skewness parameter, therefore $\pi(\nu|\alpha,\mu,\sigma)=\pi(\nu|\mu,\sigma)$. It has to be noted that, although objective methods aim to obtain prior distributions depending only on the chosen model, in practice there is always some degree of subjectivity involved. In this case, we make the assumption that the parameters are independent \emph{a priori}, noting that this is a common practice in objective Bayesian analysis.

\subsection{Objective Bayes for discrete parameter spaces}\label{sc_intro_obj}
One of the assumptions on which the model presented in this paper is based upon, is that the parameter representing the number of degrees of freedom ($\nu$) is considered as discrete. The specific reasons for this choice are presented in Section \ref{sc_nuprior}, but it is worthwhile to give an overview of the general idea behind the prior specific for $\nu$ here.

Whilst there are several approaches to deal with continuous parameter spaces, such as Jeffreys' prior \citep{Jeffreys:1961} and reference prior \citep{BBS:2009}, discrete parameter spaces have always been dealt with on a case-by-case basis. Only recently general solutions for the discrete case have been put forward, with different degree of success. Among these, possibly the most noteworthy include \cite{Barger:Bunge:2008} and \cite{BBS:2012}. However, the former approach can be applied to a limited set of models, and the latter approach shows some deficiencies in terms of objectivity and generality. In this paper, we consider the method discussed in \cite{Villa:Walker:2014b}, which, as far as we know, is the sole objective approach that can be applied to any discrete parameter space without the necessity of being ``adjusted'' to the chosen model.

The prior proposed in \cite{Villa:Walker:2014b} is based on the idea of assigning a \emph{worth} to each element $\theta$ of the discrete parameter space $\Theta$. The \emph{worth} is objectively measured by assessing what is lost if that parameter value is removed from $\Theta$, and it is the true one. Once the \emph{worth} has been determined, this will be linked to the prior probability by means of the self-information loss function \citep{Merhav:Feder:1998} $-\log\pi(\theta)$. A detailed illustration of the idea can be found in \cite{Villa:Walker:2014b}, but here is an overview.

Let us indicate by $f(x;\theta)$ a distribution (either a mass function or a density) characterised by the unknown discrete parameter(s) $\theta$ and let 
$$D(f(x;\theta)\|f(x;\theta^\prime))=\int f(x;\theta)\log\left\{\frac{f(x;\theta)}{f(x;\theta')}\right\}\,dx$$
be the Kullback--Leibler divergence \citep{Kull:1951}. The utility (i.e. \emph{worth)} to be assigned to $f(x;\theta)$ is a function of the Kullback--Leibler divergence measured from the model to the nearest one; where the nearest model is the one defined by $\theta^\prime\neq \theta$ such that $D(f(x;\theta)\|f(x;\theta^\prime))$ is minimised. In fact (see \cite{Berk:1966}) $\theta^\prime$ is where the posterior asymptotically accumulates if the true value $\theta$ is excluded from $\Theta$. The objectivity of how the utility of $f(x;\theta)$ is measured is obvious, as it depends on the choice of the model only.

Let us now write $u_1(\theta)=\log\pi(\theta)$ and let the minimum divergence from $f(x;\theta)$ be represented by $u_2(\theta)$. Note that $u_1(\theta)$ is the utility associated with the prior probability for model $f(x;\theta)$, and $u_2(\theta)$ is the utility in keeping $\theta$ in $\Theta$. We want $u_1(\theta)$ and $u_2(\theta)$ to be matching utility functions, as they are two different ways to measure the same utility in $\theta$. As it stands, $-\infty<u_1\leq0$ and $0\leq u_2<\infty$, while we actually want $u_1=-\infty$ when $u_2=0$. The scales are matched by taking exponential transformations; so $\exp(u_1)$ and $\exp(u_2)-1$ are on the same scale. Hence, we have
\begin{equation}\label{eq1}
e^{u_1(\theta)} = \pi(\theta) \propto e^{g\{u_2(\theta)\}},
\end{equation}
where
\begin{equation}\label{eq2}
g(u) = \log(e^u-1).
\end{equation}
By setting the functional form of $g$ in \eqref{eq1}, as it is defined in \eqref{eq2}, we derive the proposed objective prior for the discrete parameter $\theta$ as follows
\begin{equation}\label{eq3}
\pi(\theta) \propto \exp \left\{ \min_{\theta\neq \theta^\prime \in\Theta} D(f(x;\theta)\|f(x;\theta^\prime)) \right\} - 1.
\end{equation}
We note that in this way the Bayesian approach is conceptually consistent, as we update a prior utility assigned to $\theta$, through the application of Bayes theorem, to obtain the resulting posterior utility expressed by $\log\pi(\theta\mid x)$. Indeed, there is an elegant procedure akin to the Bayes Theorem which works from a utility point of view, namely that
$$\log\pi(\theta\mid x) = K + \log f(x\mid\theta) + \log\pi(\theta),$$
which has the interpretation of
$$\text{Utility}(\theta\mid x,\pi) = K + \text{Utility}(\theta\mid x) + \text{Utility}(\theta\mid\pi),$$
where $K$ is a constant which does not depend on $\theta$. There is then a retention of meaning between the prior and the posterior information (here represented as utilities). This property is not shared by the usual interpretation of Bayes theorem when priors are objectively obtained; in fact, the prior would usually be improper, hence not representing probabilities, whilst the posterior is (and has to be) a proper probability distribution.

\subsection{The prior for $\nu$}\label{sc_nuprior}

In this Section we will show that the prior for $\nu$ introduced in \cite{Villa:Walker:2014a} can be used as prior for $\nu$ for the skewed Student-$t$ distribution.  
The prior for $\nu$ is based on similar arguments as the ones discussed in \cite{Villa:Walker:2014a}. In particular, we consider the following.

First, the parameter is treated as discrete; that is, $\nu=1,2,\ldots$. The reason is that the amount of information provided from the data will rarely be sufficient to discern densities with $\nu$ separated by an interval smaller than one. In other words, in general, the sample size is not going to be sufficient to allow estimates more precise than size one.
Whilst it is possible, in principle, to consider any continuous parameter as discrete, for this model we deem appropriate to limit the discretisation to $\nu$. Besides the above motivation, we note that considering $\nu$ as discrete has connection with an interpretation of a $t$-distributed random variable. In fact, a $t$ with $\nu$ degrees of freedom can be seen as the ratio of two independent random variables: a standard normal and the square root of a chi-square divided by its number of degrees of freedom $\nu$. While in principle it is possible to discretise any continuous parameter, such as $\mu$, $\sigma$ and $\alpha$, the procedure would carries a strong degree of subjectivity, namely the discretisation density. In fact, the above considerations made for $\nu$ cannot be applied to the remaining three parameters of the model. Hence, the choice to consider $\nu$ only as discrete.


Second, the parameter space of $\nu$ has to be truncated. In \cite{Villa:Walker:2014a} this argument is motivated by the fact that, as the number of degrees of freedom ($\nu$) of a Student-$t$ goes to infinity, the density converges to a normal in distribution. As such, after a certain value of $\nu$, the model can be consider normal for any value of the parameter. A sensible choice is to set $\nu_{\max}=30$ to represent the normal model. Therefore, the inference problem reduces in choosing among $t$ densities with $\nu=1,\ldots,29$ and the normal density. As recalled in the introduction, a similar result holds for the skewed Student-$t$ distribution as well. Indeed, as $\nu$ goes to infinity, the model converges to a skewed normal distribution. This allows to apply the same truncation argument to the skewed Student-$t$ distribution.

To derive the prior for $\nu$, we apply the approach introduced in Section \ref{sc_intro_obj}. Let us, at first, assume that $\nu=1,\ldots,30$ and $\alpha=0.5$. In this case, the density in \eqref{eq_ast1} represents a symmetrical $t$ distribution with $\nu$ degrees of freedom, and the prior for the parameter $\nu$ has the form as in \cite{Villa:Walker:2014a}. To simplify the notation, we indicate the $t$-density with parameters $\nu$, $\alpha$, $\mu$ and $\sigma$ by $f_{\nu}^{\alpha}$; therefore, the prior for $\nu$ is given by
$$\pi(\nu|\alpha,\mu,\sigma) \propto \exp\left\{D(f_\nu^{\alpha}\|f_{\nu+1}^{\alpha})\right\}-1,$$
for $\nu<29$ and, for $\nu\geq29$
$$\pi(\nu|\alpha,\mu,\sigma) \propto \exp\left\{D(f_\nu^{\alpha}\|f_{\nu-1}^{\alpha})\right\}-1,$$
where $f_{30}^{\alpha}\sim \phi^{\alpha}$ is the skewed normal distribution with mean $\mu$ and standard deviation $\sigma$, i.e. 

\begin{equation}\label{eq_ast2}
\phi^{\alpha}(x|\mu,\sigma) = \left\{
  \begin{array}{l l}
    \dfrac{1}{\sigma\sqrt{2\pi}}\exp\left\lbrace -\left(\dfrac{x-\mu}{2\alpha\sigma}\right)^2\right\rbrace & \quad x\leq\mu\\
    \dfrac{1}{\sigma\sqrt{2\pi}}\exp\left\lbrace -\left(\dfrac{x-\mu}{2(1-\alpha)\sigma}\right)^2\right\rbrace & \quad x>\mu
  \end{array} \right.
\end{equation}
As recalled in the introduction, the choice of $\alpha=0.5$ corresponds to the usual Student-$t$ distribution and the prior above is the one introduced in 
\cite{Villa:Walker:2014a}. The following Theorem states a crucial result to set $\pi(\nu|\alpha,\mu,\sigma)$.
\begin{theorem}   Let $f_{\nu}^{\alpha}$ be the skewed Student-$t$ distribution with parameters $\mu$ and $\sigma$. Then 
$$D(f_{\nu}^{\alpha}\parallel f_{\nu+1}^{\alpha})=D(f_{\nu}^{0.5}\parallel f_{\nu+1}^{0.5}),$$ 
for every $\nu\geq 1$.
\end{theorem}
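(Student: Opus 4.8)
The plan is to evaluate the Kullback--Leibler divergence directly from its definition by splitting the integral at the switch point $x=\mu$ of the density \eqref{eq_ast1}, and then standardising each branch through an $\alpha$-dependent change of variable. The crucial observation is that on the left branch both $f_\nu^\alpha$ and $f_{\nu+1}^\alpha$ depend on $x$ only through $z=(x-\mu)/(2\alpha\sigma)$, while on the right branch they depend on $x$ only through $w=(x-\mu)/(2(1-\alpha)\sigma)$; in either case the density takes the common form $\frac{K(\nu)}{\sigma}\bigl[1+t^2/\nu\bigr]^{-(\nu+1)/2}$ with $t$ the appropriate standardised variable. Consequently the log-ratio in the integrand,
$$g_\nu(t)=\log\frac{K(\nu)}{K(\nu+1)}-\frac{\nu+1}{2}\log\!\Bigl(1+\tfrac{t^2}{\nu}\Bigr)+\frac{\nu+2}{2}\log\!\Bigl(1+\tfrac{t^2}{\nu+1}\Bigr),$$
is one and the same even function of $t$ on both branches, with no explicit dependence on $\alpha$.

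First I would write $D(f_\nu^\alpha\|f_{\nu+1}^\alpha)$ as the sum of the integral over $(-\infty,\mu]$ and the integral over $(\mu,\infty)$. On the left piece the substitution $z=(x-\mu)/(2\alpha\sigma)$ gives $dx=2\alpha\sigma\,dz$ and sends $x\in(-\infty,\mu]$ to $z\in(-\infty,0]$; the factor $\sigma$ cancels and a factor $2\alpha$ is pulled out, leaving $2\alpha K(\nu)\int_{-\infty}^0\bigl[1+z^2/\nu\bigr]^{-(\nu+1)/2}g_\nu(z)\,dz$. On the right piece the substitution $w=(x-\mu)/(2(1-\alpha)\sigma)$ yields in the same manner $2(1-\alpha)K(\nu)\int_0^\infty\bigl[1+w^2/\nu\bigr]^{-(\nu+1)/2}g_\nu(w)\,dw$.

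Next I would invoke evenness: since both $\bigl[1+t^2/\nu\bigr]^{-(\nu+1)/2}$ and $g_\nu(t)$ depend on $t$ only through $t^2$, the two standardised integrals share a common value $I(\nu)=\int_0^\infty\bigl[1+t^2/\nu\bigr]^{-(\nu+1)/2}g_\nu(t)\,dt$. Collecting the two pieces then gives $D(f_\nu^\alpha\|f_{\nu+1}^\alpha)=2\alpha K(\nu)I(\nu)+2(1-\alpha)K(\nu)I(\nu)=2K(\nu)I(\nu)$, in which the skewness parameter has cancelled because $\alpha+(1-\alpha)=1$. As this final expression contains no $\alpha$, it equals its value at $\alpha=0.5$, which is exactly $D(f_\nu^{0.5}\|f_{\nu+1}^{0.5})$, and the theorem follows.

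I expect the only genuine obstacle to be bookkeeping rather than ideas: one must verify that $I(\nu)$ is finite (the polynomial tail decay of $\bigl[1+t^2/\nu\bigr]^{-(\nu+1)/2}$ dominates the merely logarithmic growth of $g_\nu(t)$, so integrability holds) and that the splitting and substitutions are legitimate for the improper integral. The conceptual heart is the remark that the per-branch Jacobians $2\alpha\sigma$ and $2(1-\alpha)\sigma$ are precisely the scale factors rendering each branch a standardised symmetric $t$ density, so the asymmetry merely reweights two otherwise identical half-integrals by $\alpha$ and $1-\alpha$, whose sum is one.
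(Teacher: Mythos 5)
Your proof is correct and takes essentially the same route as the paper's: you split the divergence at the mode $x=\mu$, standardise each branch with the substitutions $z=(x-\mu)/(2\alpha\sigma)$ and $w=(x-\mu)/(2(1-\alpha)\sigma)$ so the Jacobians contribute the factors $2\alpha$ and $2(1-\alpha)$, and then use the evenness of the standardised Student-$t$ integrand (the paper phrases this as the symmetry giving $D_{\leq}(f_{\nu}^{0.5}\parallel f_{\nu+1}^{0.5})=D_{>}(f_{\nu}^{0.5}\parallel f_{\nu+1}^{0.5})$) so that the two half-integrals coincide and the weights $\alpha$ and $1-\alpha$ sum to one. Your explicit check that $I(\nu)$ is finite is a minor addition the paper leaves implicit, not a different argument.
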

\begin{proof} Without loss of generality, we can consider $\mu=0$ and $\sigma=1$.
Note that
$$D(f_{\nu}^{\alpha}\parallel f_{\nu+1}^{\alpha})=D_{\leq}(f_{\nu}^{\alpha}\parallel f_{\nu+1}^{\alpha})+D_{>}(f_{\nu}^{\alpha}\parallel f_{\nu+1}^{\alpha}),$$
where
\begin{equation*}
\begin{split}
D_{\leq}(f_{\nu}^{\alpha}\parallel f_{\nu+1}^{\alpha})&=\int_{-\infty}^{0}f_{\nu}^{\alpha}(y)\log\left\{\frac{f_{\nu}^{\alpha}(y)}{f_{\nu+1}^{\alpha}(y)}\right\}\,dy,\\
\end{split}
\end{equation*}
and
\begin{equation*}
\begin{split}
D_{>}(f_{\nu}^{\alpha}\parallel f_{\nu+1}^{\alpha})&=\int_{0}^{+\infty}f_{\nu}^{\alpha}(y)\log\left\{\frac{f_{\nu}^{\alpha}(y)}{f_{\nu+1}^{\alpha}(y)}\right\}\,dy .\\
\end{split}
\end{equation*}
We focus on the first term 
\begin{equation*}
\begin{split}
D_{\leq}(f_{\nu}^{\alpha}\parallel f_{\nu+1}^{\alpha})&=\int_{-\infty}^0K(\nu)\left[1+\frac{1}{\nu}\left(\frac{y}{2\alpha}\right)^2\right]^{-\frac{\nu+1}{2}}\log\left\{\frac{K(\nu)\left[1+\frac{1}{\nu}\left(\frac{y}{2\alpha}\right)^2\right]^{-\frac{\nu+1}{2}}}{K(\nu+1)\left[1+\frac{1}{\nu+1}\left(\frac{y}{2\alpha}\right)^2\right]^{-\frac{\nu+2}{2}}}\right\}\,dy.\\
\end{split}
\end{equation*}
The change of variable $z={y}/{2\alpha}$ yields 
\begin{equation*}
\begin{split}
D_{\leq}(f_{\nu}^{\alpha}\parallel f_{\nu+1}^{\alpha})
&=2\alpha\int_{-\infty}^0 K(\nu)\left[1+\frac{z^2}{\nu}\right]^{-\frac{\nu+1}{2}}\log\left\{\frac{K(\nu)\left[1+\frac{z^2}{\nu}\right]^{-\frac{\nu+1}{2}}}{K(\nu+1)\left[1+\frac{z^2}{\nu+1}\right]^{-\frac{\nu+2}{2}}}\right\}\,dz\\
&=2\alpha D_{\leq}(f_{\nu}^{0.5}\parallel f_{\nu+1}^{0.5}).
\end{split}
\end{equation*}
In a similar fashion
$$D_{>}(f_{\nu}^{\alpha}\parallel f_{\nu+1}^{\alpha})=2(1-\alpha)D_{>}(f_{\nu}^{0.5}\parallel f_{\nu+1}^{0.5}).$$
The symmetry of the standard Student-$t$ distribution ensures that $$D_{\leq}(f_{\nu}^{0.5}\parallel f_{\nu+1}^{0.5})=D_{>}(f_{\nu}^{0.5}\parallel f_{\nu+1}^{0.5}).$$ 
Therefore 
$$2D_{\leq}(f_{\nu}^{0.5}\parallel f_{\nu+1}^{0.5})=2D_{>}(f_{\nu}^{0.5}\parallel f_{\nu+1}^{0.5})=D(f_{\nu}^{0.5}\parallel f_{\nu+1}^{0.5}),$$
and we can easily conclude 
\begin{equation*}
\begin{split}
D(f_{\nu}^{\alpha}\parallel f_{\nu+1}^{\alpha})&=D_{\leq}(f_{\nu}^{\alpha}\parallel f_{\nu+1}^{\alpha})+D_{>}(f_{\nu}^{\alpha}\parallel f_{\nu+1}^{\alpha})\\
&=2\alpha D_{\leq}(f_{\nu}^{0.5}\parallel f_{\nu+1}^{0.5})+2(1-\alpha)D_{>}(f_{\nu}^{0.5}\parallel f_{\nu+1}^{0.5})\\
&=\alpha D(f_{\nu}^{0.5}\parallel f_{\nu+1}^{0.5})+(1-\alpha)D(f_{\nu}^{0.5}\parallel f_{\nu+1}^{0.5})\\
&=D(f_{\nu}^{0.5}\parallel f_{\nu+1}^{0.5}).
\end{split}
\end{equation*}
\end{proof}
\noindent In a similar way, it can be proved that 
$$D(f_\nu^{\alpha}\|f_{\nu-1}^{\alpha})=D(f_\nu^{0.5}\|f_{\nu-1}^{0.5})$$
for every $\nu\geq 2$. The above result also holds when we assume that for $\nu=30$, $f_{\nu}^{\alpha}$ is the skewed normal distribution. These results lead to the following  important considerations:
\begin{enumerate}
\item The objective prior distribution for $\nu$ doesn't depend by the skewness parameter $\alpha$; 
\item The objective prior distribution for $\nu$ for the skewed model is exactly the prior introduced in \cite{Villa:Walker:2014a}, i.e. $\pi(\nu|\mu,\sigma,\alpha)=\pi(\nu|0.5,\mu,\sigma)$.
\end{enumerate}

\subsection{Prior distributions for $\alpha$ and $(\mu,\sigma)$}\label{sc_prior_others}
The derivation of non-informative priors for the remaining parameters of the AST is straightforward. A common assumption is that the parameters are independent a priori. Although this assumption can be relaxed, in the sense of limiting the independence to the one between the skewness parameter on one side and the location and scale parameters of the other side, the resulting overall prior is the same.

In fact, if we consider $\mu$ independent from $\sigma$, the Jeffreys' independent prior will have the form $\pi(\mu,\sigma)=\pi(\mu)\pi(\sigma)$. Given that the Jeffreys' prior for a location parameter is proportional to 1, and the Jeffreys' prior for a scale parameter is proportional to the inverse of the parameter, we would have $\pi(\mu,\sigma)\propto 1/\sigma$ \citep{Jeffreys:1961}. However, the above prior coincides with the reference prior for the pair $(\mu,\sigma)$ \citep{BBS:2009}. Therefore, assuming or not assuming prior independence between the location parameter and the scale parameter does not make any practical difference from an inferential point of view.

For the skewness parameter $\alpha$, we will use the Jeffreys' prior discussed in \cite{Rubio:Steel:2014}, which is a Beta distribution with both parameters equal to 1/2; that is $\pi(\alpha)\sim\mbox{Beta}(1/2,1/2)$. In fact, it can be seen in Theorem 3 and Corollary 3 of \cite{Rubio:Steel:2014} that the (independence) Jeffreys prior for $\alpha$,  under the parameterisation in \eqref{eq_ast1},  is precisely a Beta distribution  with both parameters equal to $1/2$.  

\section{Simulation study}\label{sc_simul}
In this section we present a simulation study of the prior for $\nu$. Due to the objective nature of the prior considered it is appropriate to present the frequentist properties of the yielded marginal posterior for the number of degrees of freedom. In particular, we analyse the frequentist mean squared error (MSE) and the frequentist coverage of the $95\%$ credible intervals. The former represents a measure of the precision of the estimate, while the latter reports the proportion of times the true value is contained in the interval defined by the $2.5\%$ and the $97.5\%$ quantiles of the posterior in a repeated sampling scenario. The posterior distribution for $\nu$ tends to be skewed, as for example is shown in Figure \ref{fig:single_samp}; as such, an appropriate index of the posterior which summarises its centrality is the median. Furthermore, considering $\nu$ as discrete calls for an index which is discrete as well, i.e. the median. Therefore, the MSE is computed with respect to the median, and the precision of the estimate is defined by the relative square root of the mean squared error from the median: $\sqrt{\mbox{MSE}(\nu)}/\nu$. Given that the model converges to the normal distribution for $\nu\rightarrow\infty$, it is more difficult to discern between AST densities with large values of $\nu$. By considering the relative MSE, we somehow counterbalance an otherwise naturally increasing MSE and give a more interpretable information about the performance of the prior. 

The posterior distribution for $\nu$ is obtained by marginalising the full posterior $\pi(\alpha,\nu,\mu,\sigma)\propto L(\alpha,\nu,\mu,\sigma|x)\pi(\alpha,\nu,\mu,\sigma)$, where $\pi(\alpha,\nu,\mu,\sigma)$ is the prior defined in \eqref{eq_prior.1} and $L(\alpha,\nu,\mu,\sigma|x)$ is the likelihood function defined in \eqref{eq_like1}. As the posterior distribution is analytically intractable, we use Monte Carlo methods to obtain the marginal posterior distributions for each parameter. The algorithm employed is outlined in the Appendix.

In this simulation study we consider the following scenarios. We noted that the location parameter and the scale parameter do not have any effect on the inferential results for the number of degrees of freedom, as such we have considered, without loss of generality, $\mu=0$ and $\sigma=1$. We have considered three different values of the skewness parameter, that is $\alpha=0.3$, $\alpha=0.5$ and $\alpha=0.8$. For each of the above three values we have performed repeated sampling with $\nu=1,\ldots,20$. We have run 100000 iterations of the MCMC algorithm for each case, and for a sequence of sample sizes $n=50$, $n=100$ and $n=1000$. The simulation has been done in  R language and it took about 5.4 hours per case in a cluster of computers with i7 processors. Although the main fields of application of the asymmetrical $t$ density discussed in this paper call for relatively large sample sizes, we have also considered the possibility of applying the model to small numbers of observations. It is in fact for relatively small sample size that Bayesian analysis tends to give better pay-off compared to the frequentist approach. Therefore, beside considering $n=1000$, we have also analysed the frequentist properties for $n=100$ and $n=50$. The prior used for the parameters $\alpha$, $\mu$ and $\sigma$ are the objective priors outlined in Section \ref{sc_prelim} which, as shown in \cite{Rubio:Steel:2014} and the references therein, yield proper posterior distributions.
\begin{figure}[h!]
\centering
\subfigure[]{%
\includegraphics[scale=0.25]{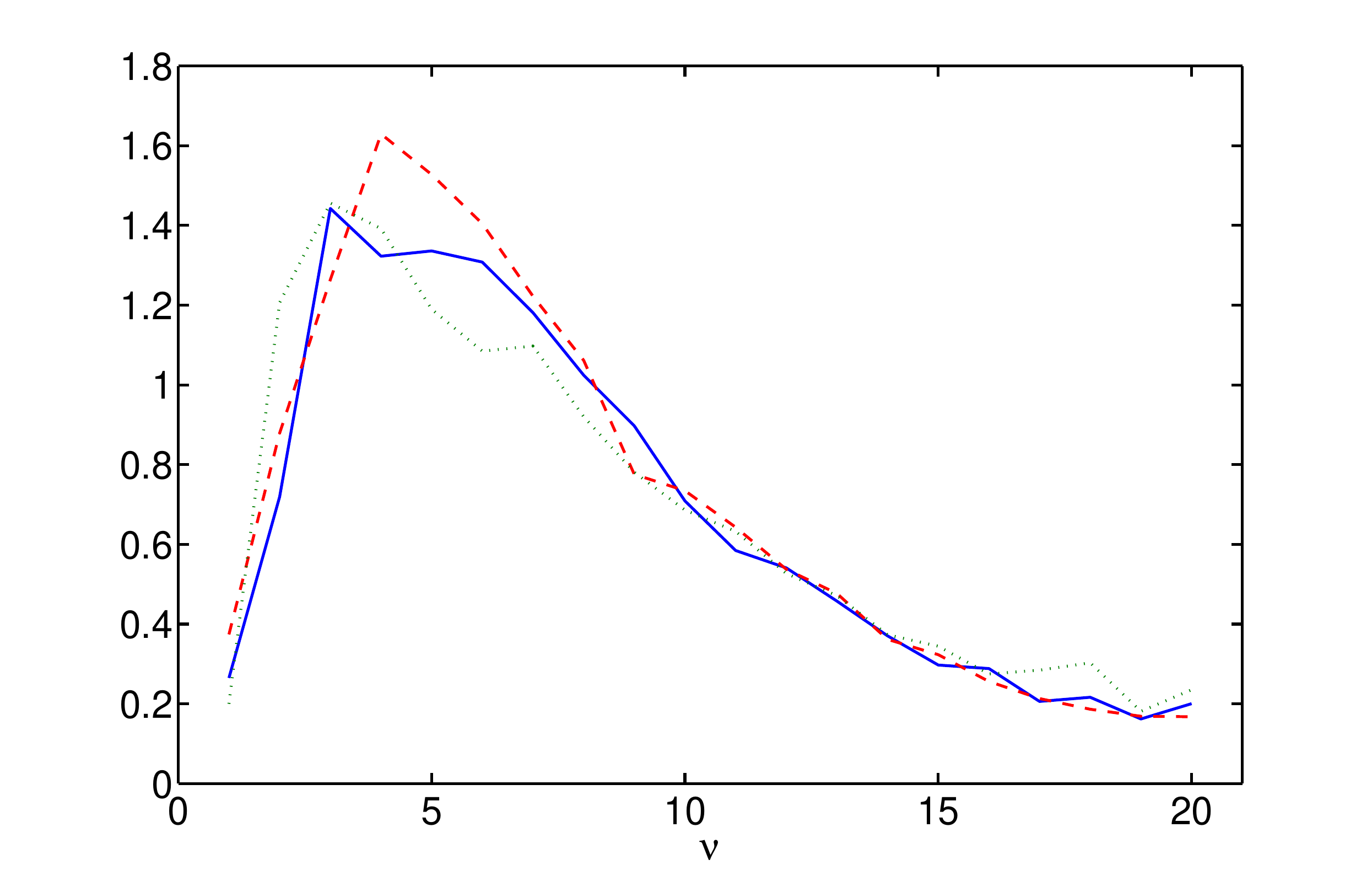} 
\label{fig:subfigure1}}
\quad
\subfigure[]{%
\includegraphics[scale=0.25]{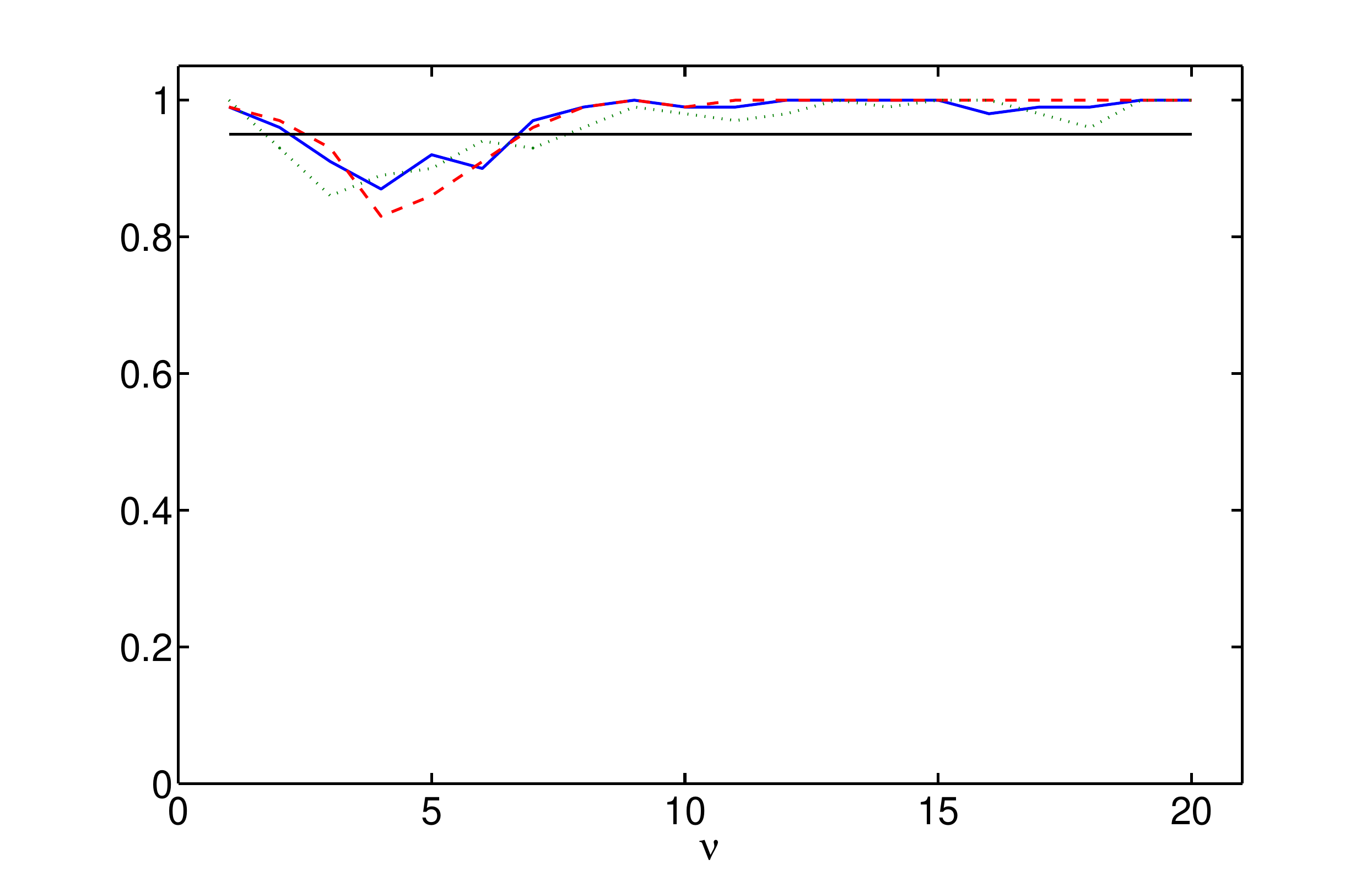} 
\label{fig:subfigure2}}
\subfigure[]{%
\includegraphics[scale=0.25]{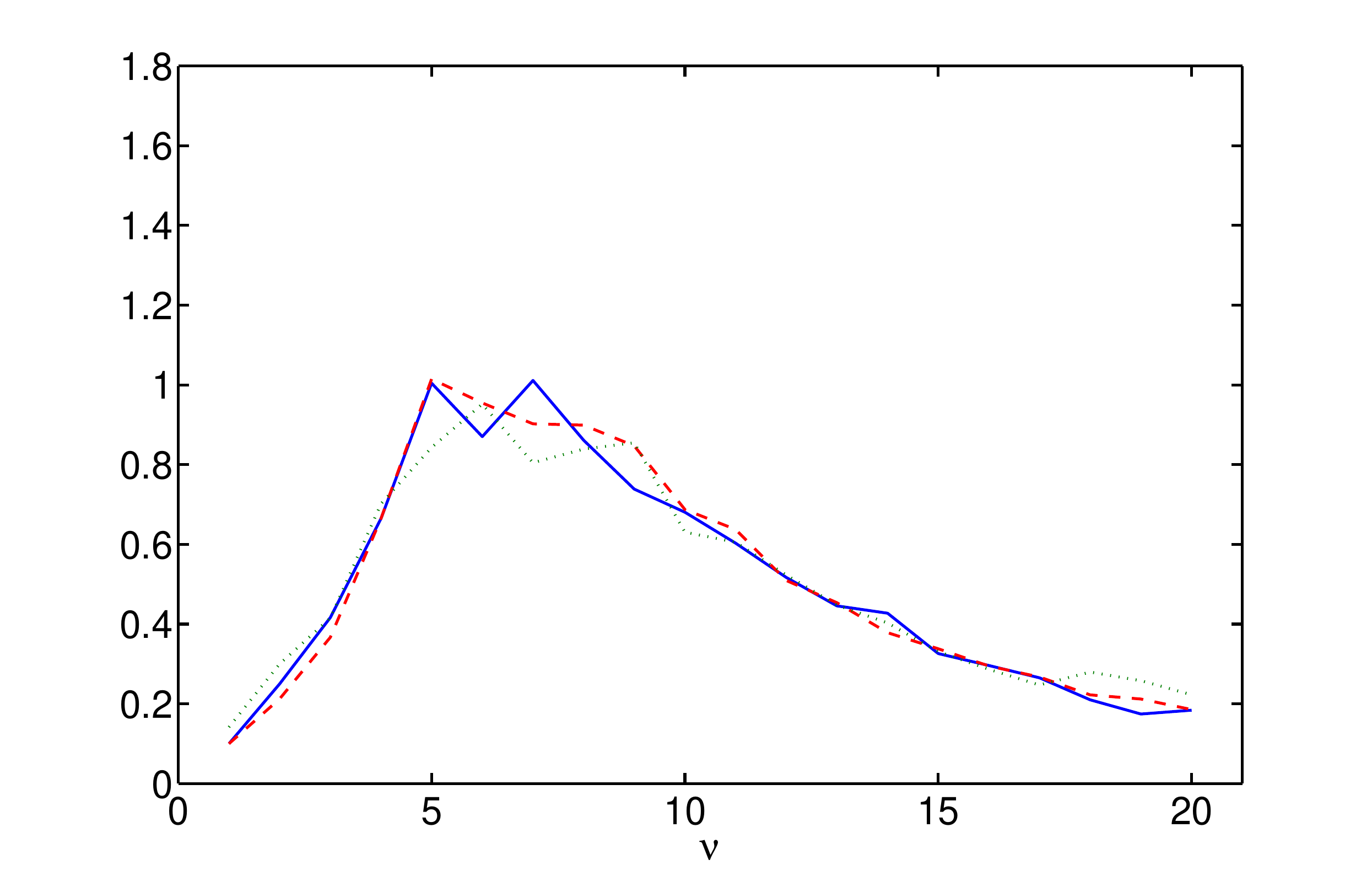}
\label{fig:subfigure3}}
\quad
\subfigure[]{%
\includegraphics[scale=0.25]{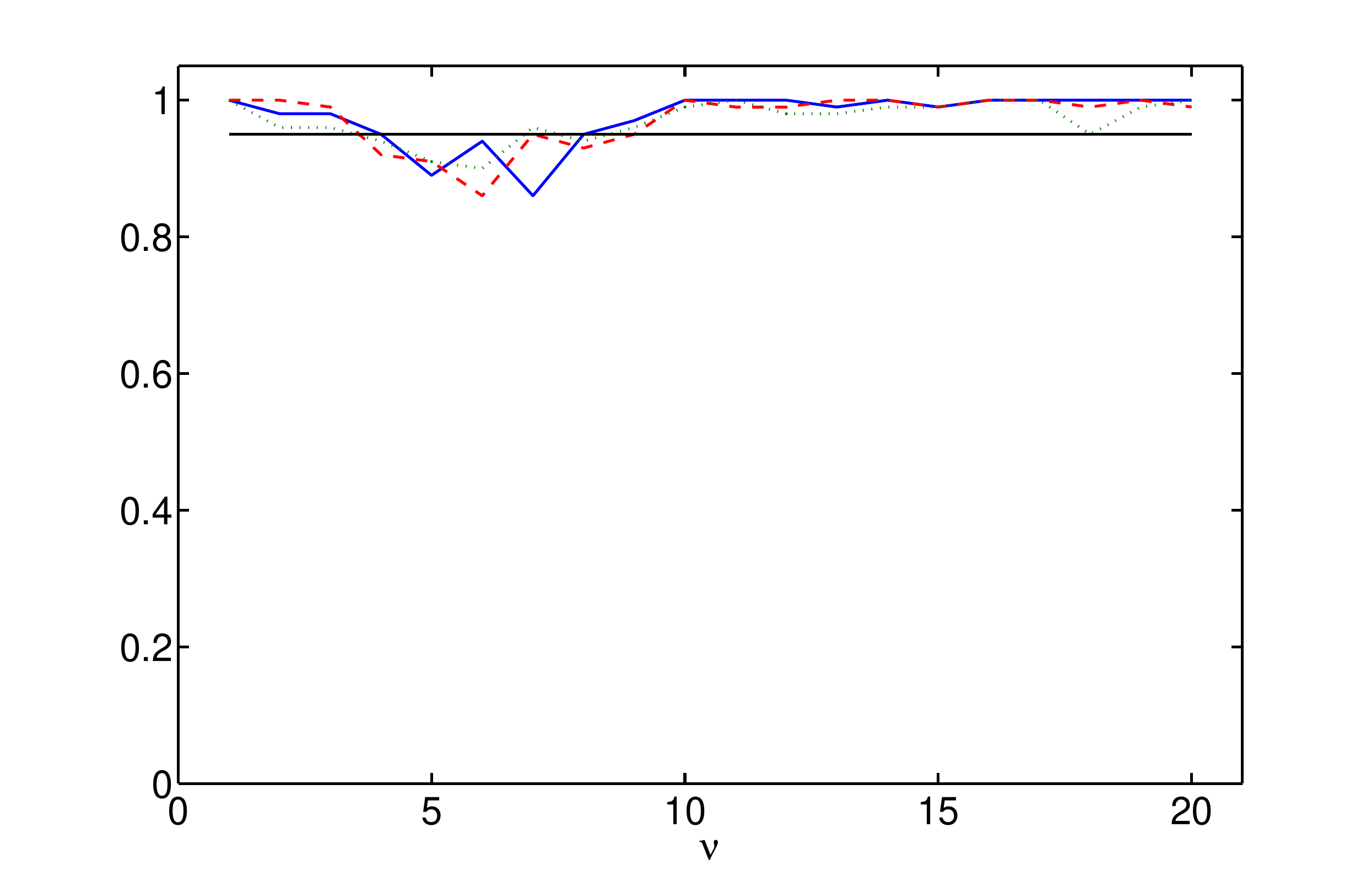}
\label{fig:subfigure4}}
\subfigure[]{%
\includegraphics[scale=0.25]{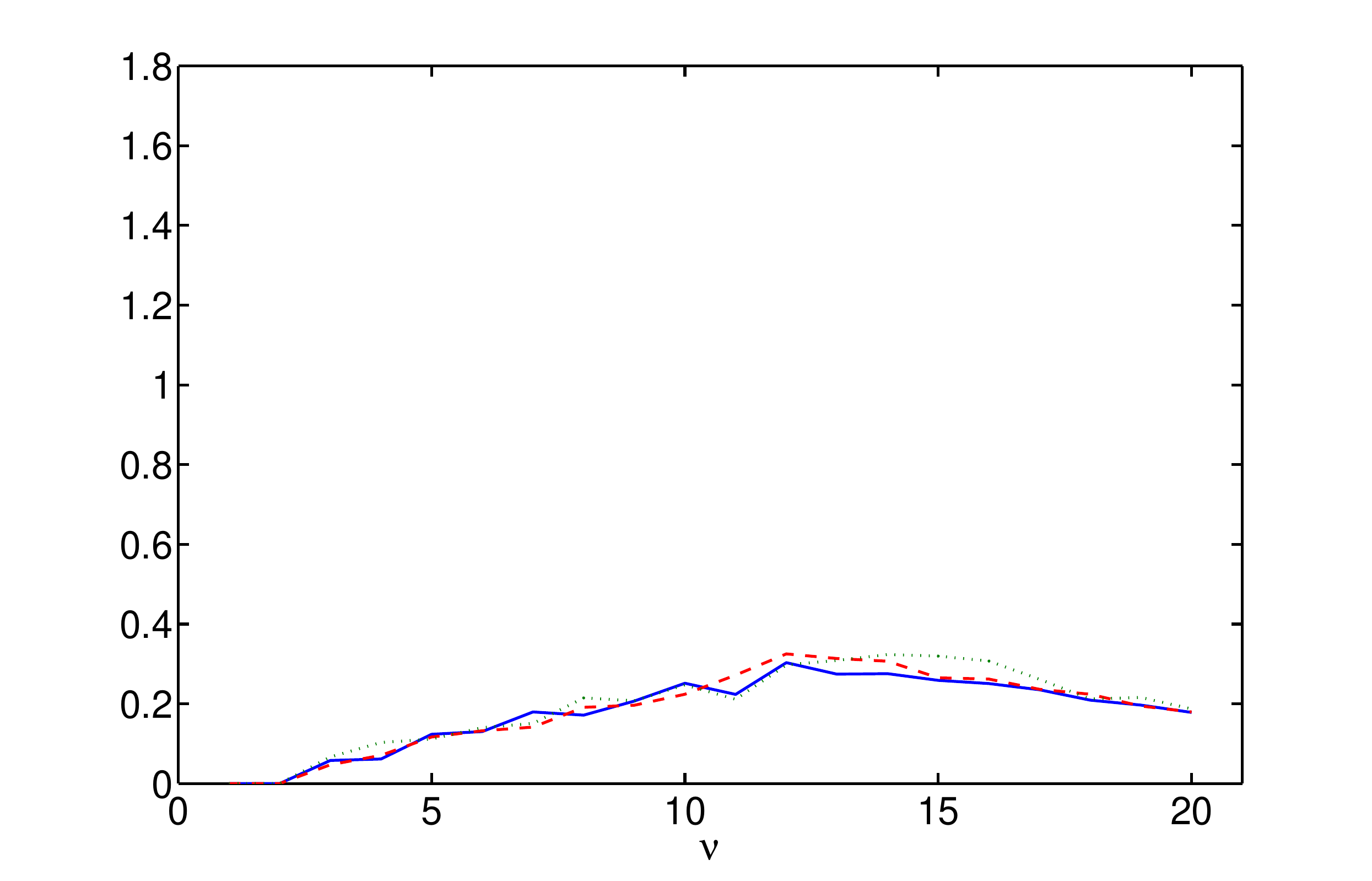}
\label{fig:subfigure5}}
\quad
\subfigure[]{%
\includegraphics[scale=0.25]{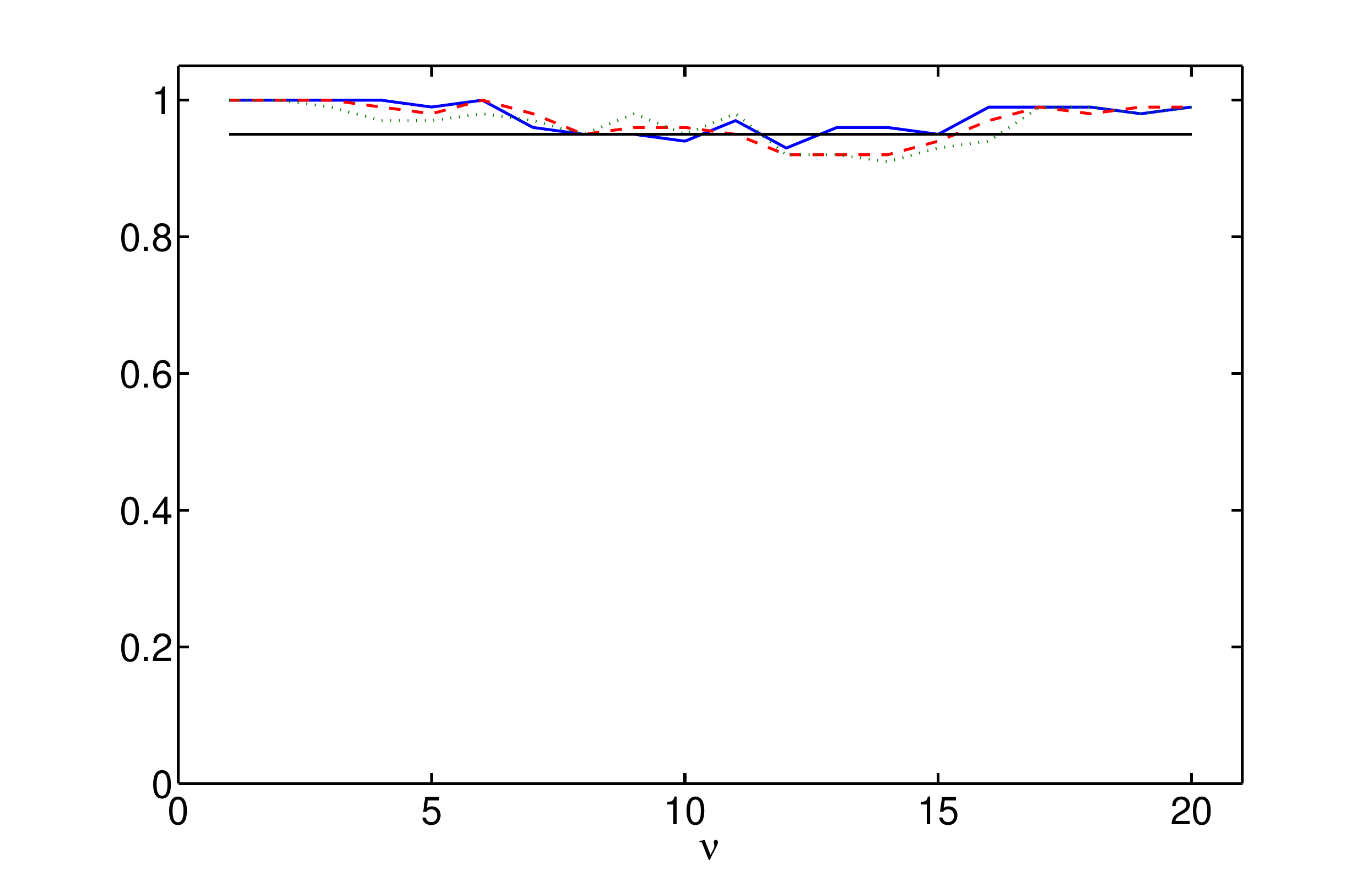}
\label{fig:subfigure6}}
\caption{Frequentist coverage of the $95\%$ credible intervals for $\nu$ (right) and square root of relative mean squared error of the estimator of $\nu$ (left). The simulations are for $\alpha=0.3$ (solid), $\alpha=0.5$ (dashed) and $\alpha=0.8$ (dotted), and for $n=50$ (top), $n=100$ (middle) and $n=1000$ (bottom).}
\label{fig:figure1}
\end{figure}
The simulation results are summarised in Figure \ref{fig:figure1}. The left column, plots (a), (c) and (e), shows the square root of the relative mean squared error for the posterior medians of $\nu$. While there is a dependence on the accuracy of the estimate from the sample size, we do not appreciate any effect from the value of the skewness parameter $\alpha$. In fact, within each plot on the left-hand-side, the mean squared error curves have a similar behaviour, with a higher value towards the region of the parameter space where contiguous number of degrees of freedom characterise distributions relatively different. As $\nu$ increases, leading to $t$ densities which are more and more similar to each other, the relative mean squared error decreases. As expected, the mean squared error in higher for small sample sizes than for larger sample sizes, given that more information is carried by the observations via the likelihood function. This is easily seen by moving from the top to the bottom of the left column of Figure \ref{fig:figure1}, corresponding to sample sizes of, respectively, $n=50$, $n=100$ and $n=1000$.

The right column of Figure \ref{fig:figure1} shows the frequentist coverage of the $95\%$ credible intervals for $n=50$ (top), $n=100$ (middle) and $n=1000$ (bottom). Although there is no apparent impact from different values of $\alpha$, we note more variable coverage values around the region of the parameter space associated with higher relative mean squared errors, compared to areas where the relative mean squared error is smaller. Although this behaviour is common to any sample size, we see a shifting of the more uncertain area towards higher number of degrees of freedom. This last result is common to other Bayesian objective methods to estimate $\nu$, such as in \cite{Fonseca:Ferreira:Migon:2008}, \cite{Villa:Walker:2014a} and the references therein. The above conclusion can also be drawn by inspecting the posterior median credible intervals for $\nu$, shown in Table \ref{tab:median_intervals}. In particular, the higher the value of the number of degrees of freedom, keeping the sample size fixed, the larger the interval. Alternatively, the higher the sample size, for the same value of $\nu$, the smaller the median credible interval. As expected, there is no appreciable difference in the median interval for different values of $\alpha$.\\

\begin{table}[]
\centering
\begin{tabular}{c|ccc|ccc|ccc}
\hline
      & \multicolumn{3}{c|}{$n=50$}                 & \multicolumn{3}{c|}{$n=100$}                & \multicolumn{3}{c}{$n=1000$}               \\
      \hline
$\nu$ & $\alpha=0.3$ & $\alpha=0.5$ & $\alpha=0.8$ & $\alpha=0.3$ & $\alpha=0.5$ & $\alpha=0.8$ & $\alpha=0.3$ & $\alpha=0.5$ & $\alpha=0.8$ \\
\hline
1     & (1,1)        & (1,1)        & (1,1)        & (1,1)        & (1,1)        & (1,1)        & (1,1)        & (1,1)        & (1,1)        \\
2     & (2,4)        & (1,5)        & (2,3)        & (1,2)        & (2,3)        & (2,3)        & (2,2)        & (2,2)        & (2,2)        \\
3     & (2,6)        & (2,8)        & (2,7)       & (2,3)        & (2,4)        & (2,6)        & (3,3)        & (3,3)        & (3,3)        \\
4     & (2,7)        & (2,7)       & (2,6)        & (3,9)        & (2,5)        & (3,7)        & (4,5)        & (4,5)        & (4,5)        \\
5     & (4,8)       & (3,8)       & (2,8)        & (4,14)       & (3,10)       & (4,14)       & (4,5)        & (4,5)        & (4,5)        \\
6     & (2,25)        & (3,7)       & (3,20)       & (4,15)       & (4,12)       & (4,17)       & (5,6)        & (5,6)        & (5,7)        \\
7     & (3,29)       & (5,29)       & (5,30)       & (3,10)       & (4,15)       & (4,14)       & (5,7)        & (6,8)        & (5,8)        \\
8     & (2,26)       & (3,29)       & (3,26)       & (3,28)        & (5,27)       & (5,29)       & (6,10)       & (6,9)        & (6,10)       \\
9     & (5,30)       & (3,30)       & (5,30)       & (3,22)       & (4,24)       & (4,23)       & (7,11)       & (7,11)       & (7,12)       \\
10    & (3,26)       & (4,30)       & (4,29)       & (7,30)       & (5,28)       & (5,28)       & (7,12)       & (9,16)       & (7,12)       \\
11    & (5,30)       & (5,30)       & (3,27)       & (5,29)       & (5,28)       & (4,27)       & (9,18)       & (8,13)       & (8,14)       \\
12    & (4,27)       & (6,29)       & (6,30)       & (8,30)       & (6,29)       & (4,24)       & (8,13)       & (9,17)       & (9,16)       \\
13    & (5,30)       & (3,30)       & (6,30)       & (7,30)       & (5,28)       & (7,30)       & (9,15)       & (10,18)      & (10,20)      \\
14    & (4,30)       & (8,27)       & (6,30)       & (5,27)       & (6,29)       & (6,30)       & (12,29)      & (9,19)       & (9,19)       \\
15    & (6,30)       & (3,29)       & (4,29)       & (5,29)       & (6,30)       & (7,30)       & (9,18)       & (9,18)       & (9,17)       \\
16    & (5,30)       & (3,30)       & (6,30)       & (5,28)       & (8,30)       & (7,30)       & (13,29)      & (11,25)      & (11,22)      \\
17    & (4,29)       & (7,29)       & (5,30)       & (6,30)       & (5,27)       & (5,26)       & (9,19)       & (10,21)      & (11,24)      \\
18    & (5,30)       & (4,30)       & (7,30)       & (7,30)       & (6,29)       & (7,30)       & (13,30)      & (11,24)      & (13,29)      \\
19    & (5,30)       & (4,30)       & (4,30)       & (7,30)       & (6,30)       & (6,29)       & (11,23)      & (12,29)      & (13,29)      \\
20    & (4,29)       & (2,29)       & (5,29)       & (7,30)       & (6,30)       & (7,30)       & (15,30)      & (13,29)      & (13,29) \\
\hline     
\end{tabular}
\caption{Median $95\%$ credible intervals of the posterior of $\nu$, for simulations with $\nu=1,\ldots,20$, $\alpha=0.3,0.5,0.8$ and $n=50,100,1000$.}
\label{tab:median_intervals}
\end{table}

To have a feeling for the overall estimation procedure, we illustrate in detail the analysis of a single independent and identically distributed sample from a known model. We draw a sample of size $n=200$ from an AST distribution with parameters $\alpha=0.35$, $\mu=2$, $\sigma=1.5$ and $\nu=6$. The posterior distributions are obtained via Monte Carlo methods with 100000 iterations and a burn-in period of 5000 iterations and by considering the objective priors described in Section \ref{sc_priors}. In Figure \ref{fig:single_samp} we have reported, for each parameter, the chain samples and the histogram of the posterior distribution,
\begin{figure}[h!]
\centering
\subfigure[]{%
\includegraphics[scale=0.25]{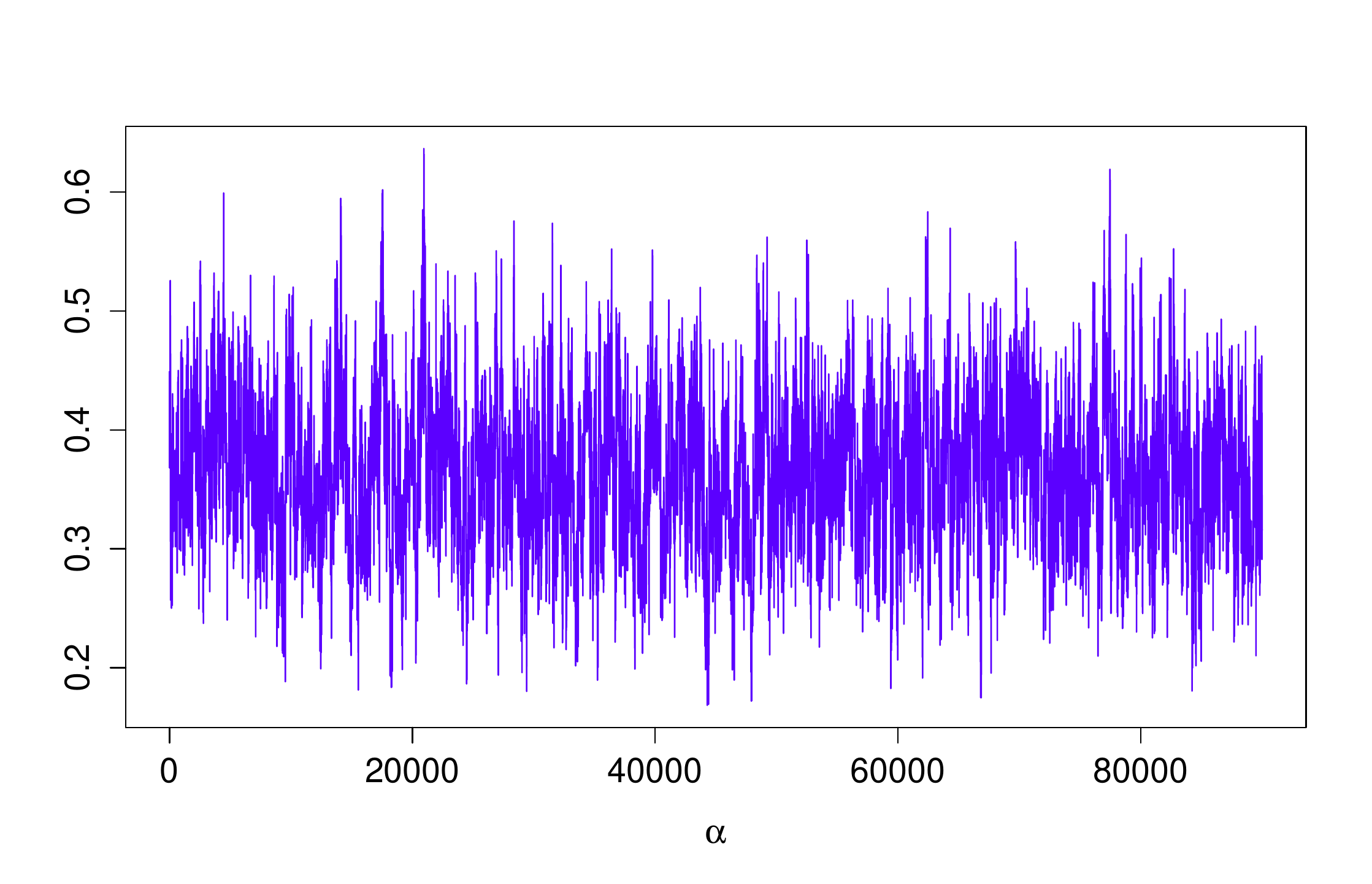} 
\label{fig:sf1}}
\quad
\subfigure[]{%
\includegraphics[scale=0.25]{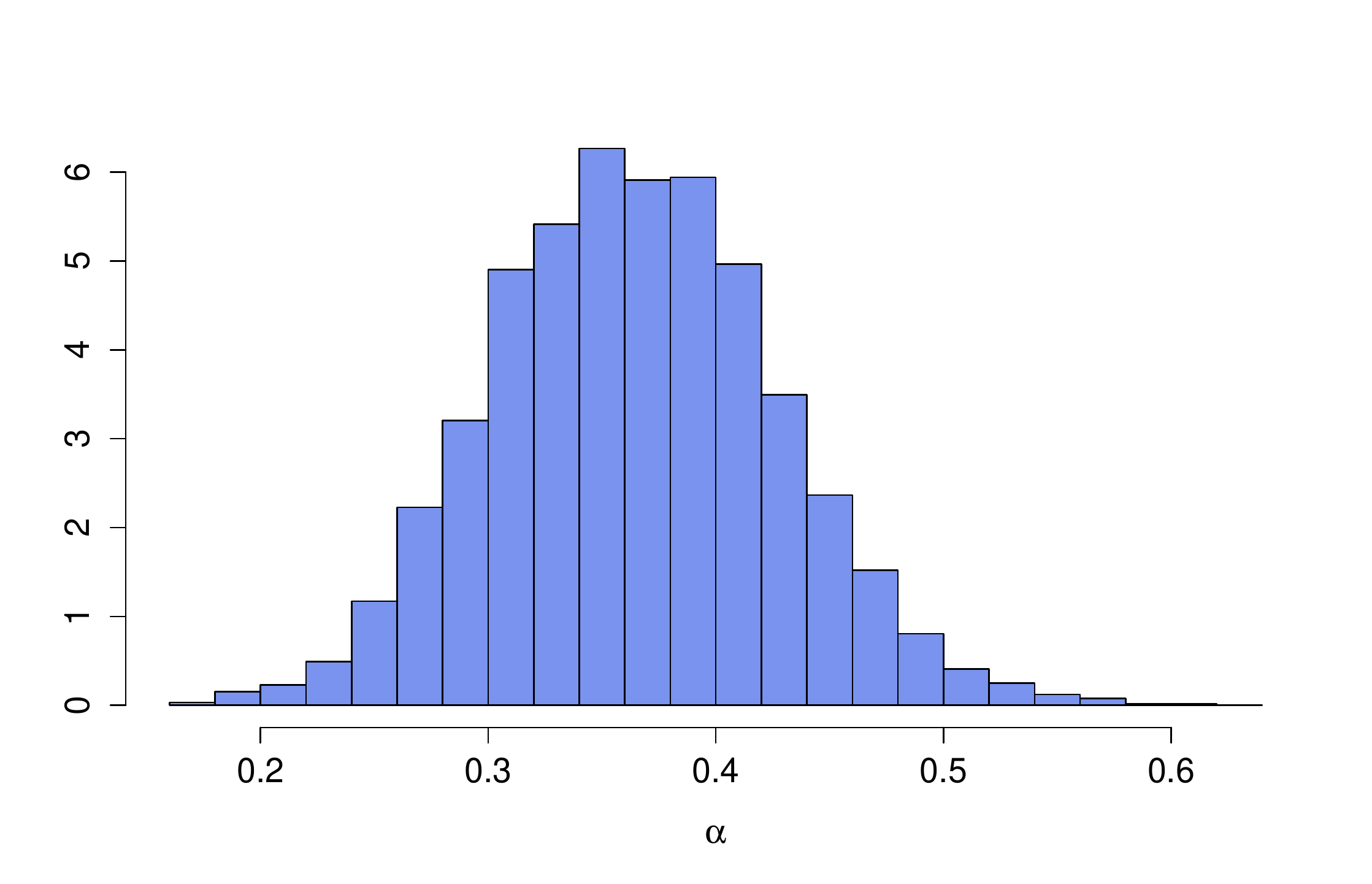} 
\label{fig:sf2}}
\subfigure[]{%
\includegraphics[scale=0.25]{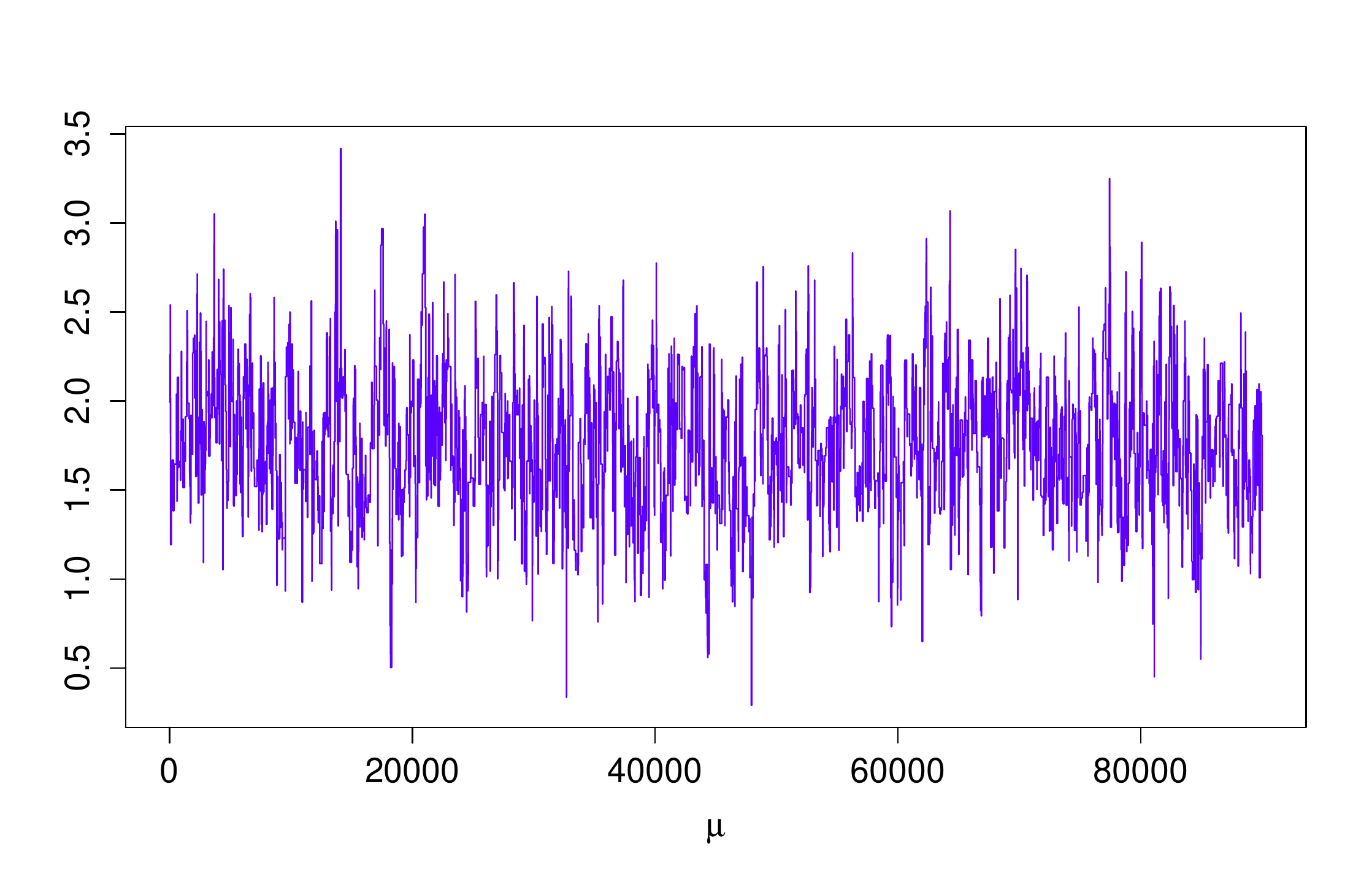}
\label{fig:sf3}}
\quad
\subfigure[]{%
\includegraphics[scale=0.25]{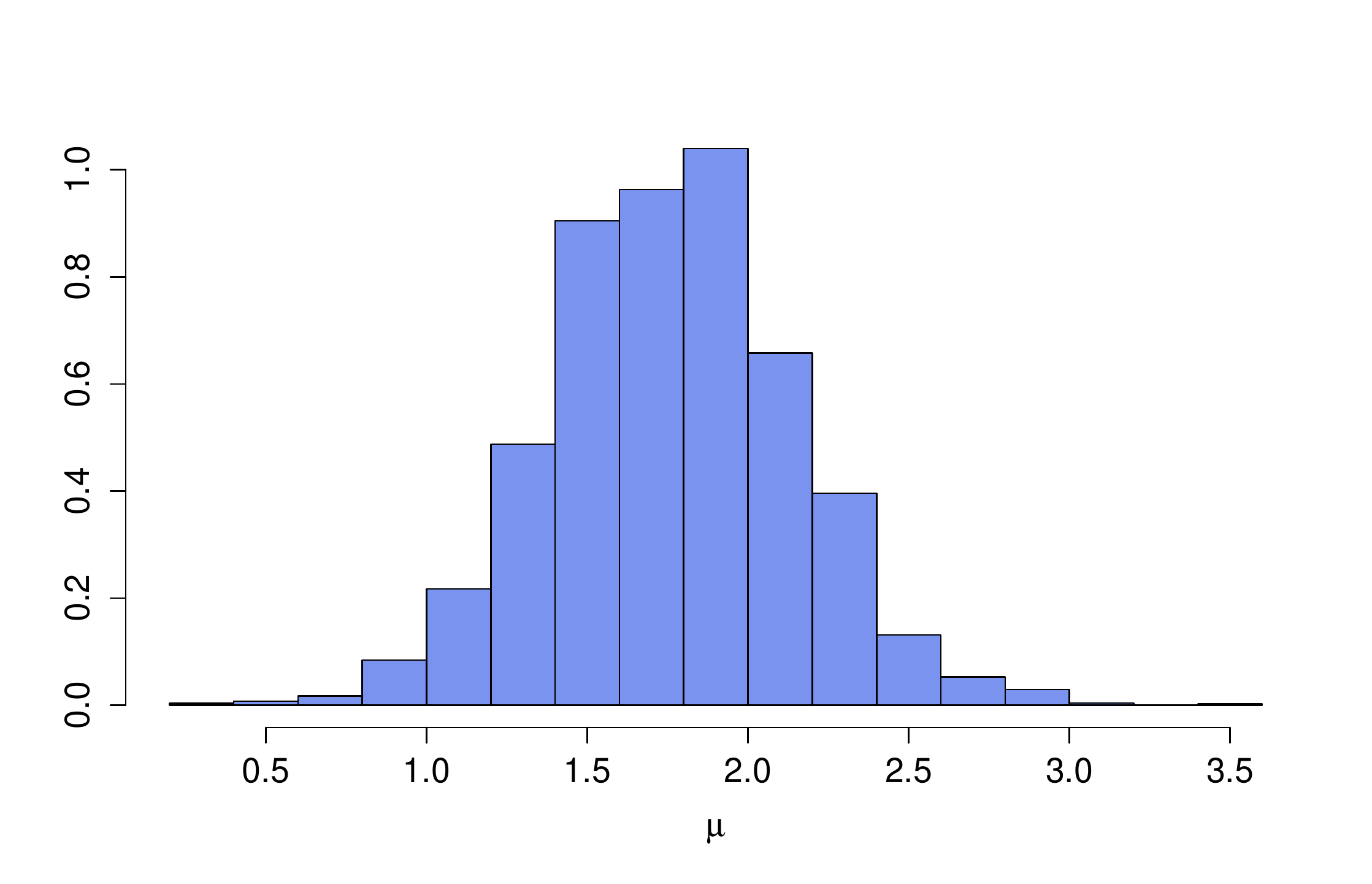}
\label{fig:sf4}}
\subfigure[]{%
\includegraphics[scale=0.25]{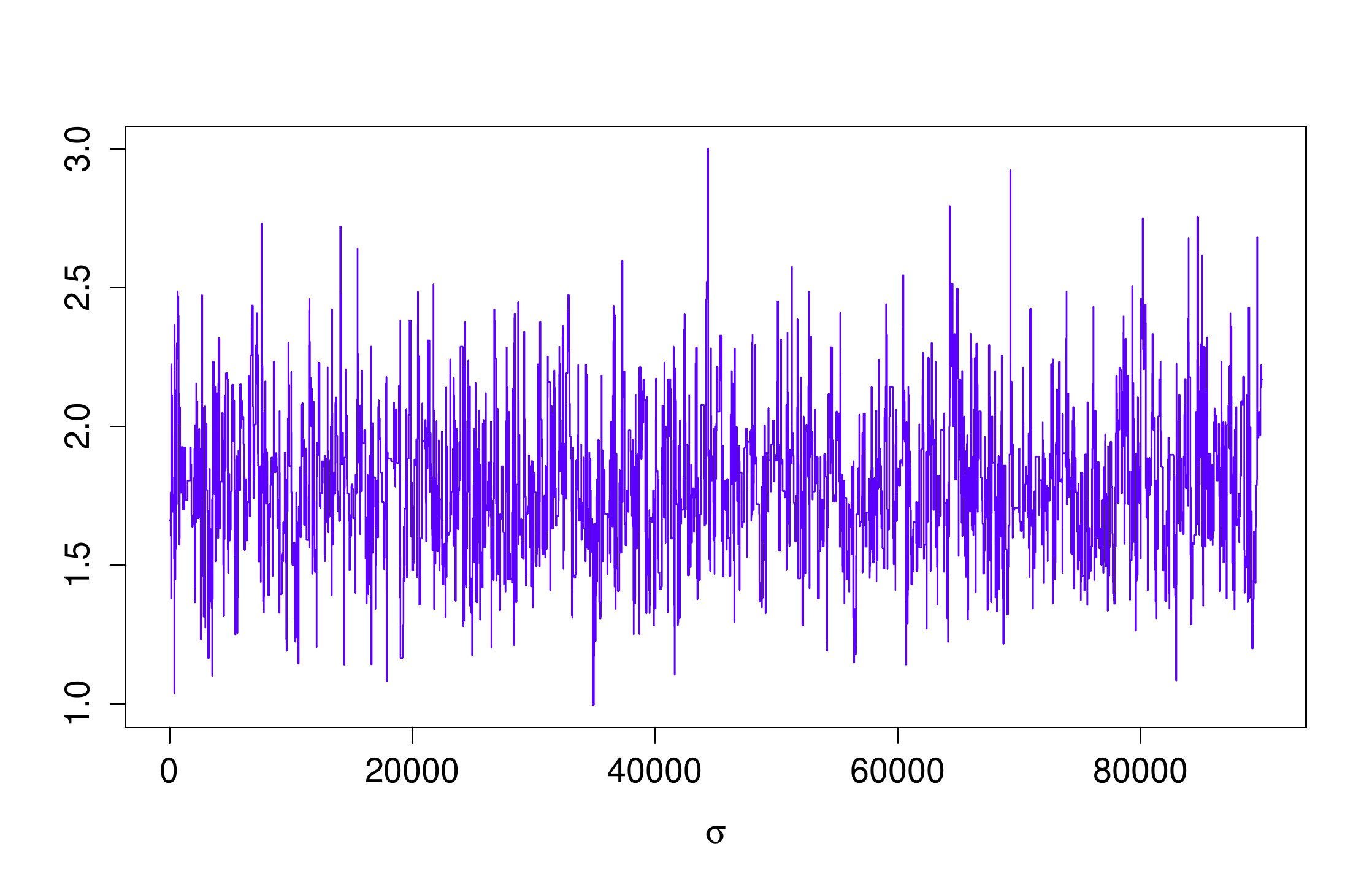}
\label{fig:sf5}}
\quad
\subfigure[]{%
\includegraphics[scale=0.25]{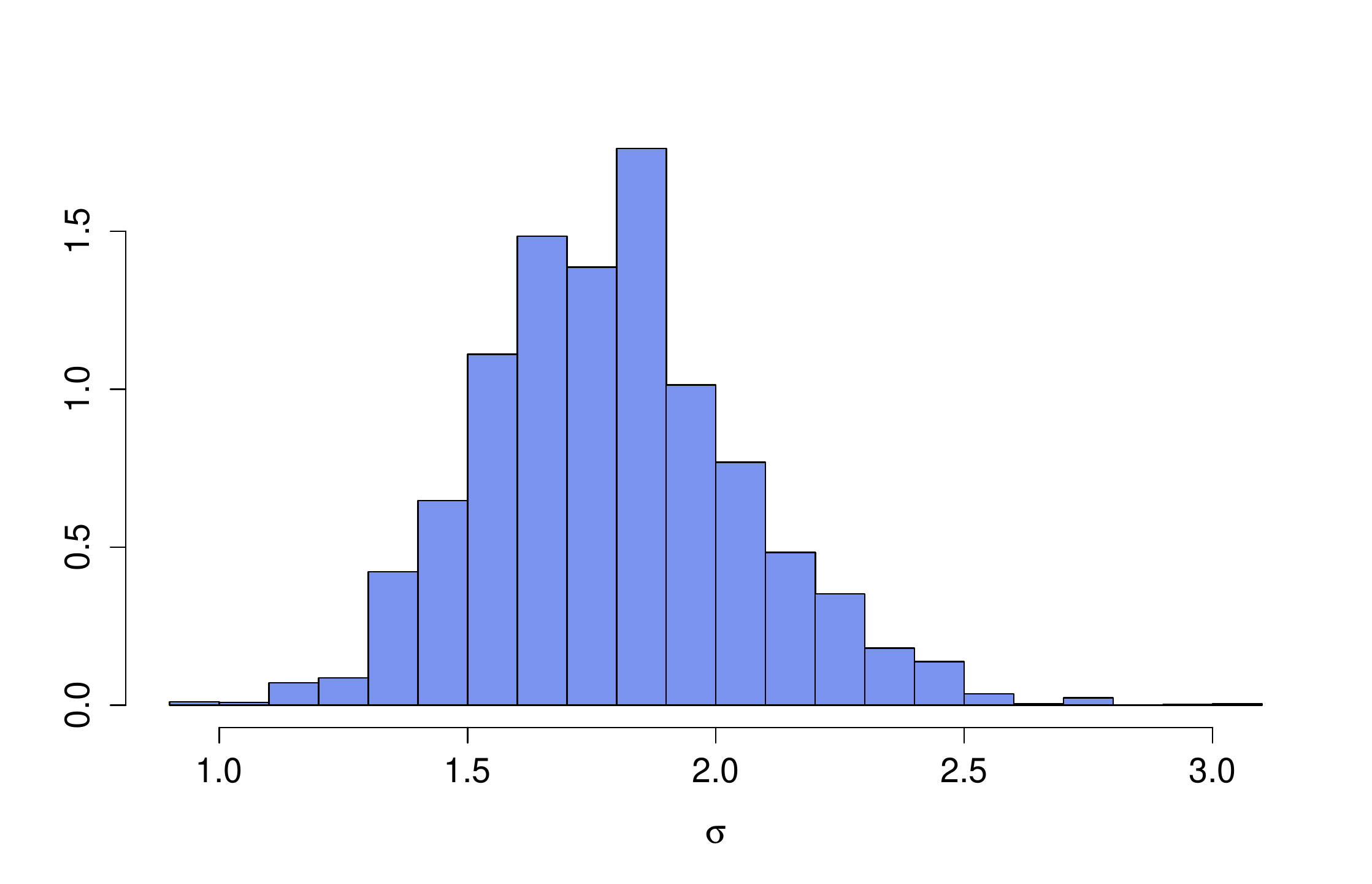}
\label{fig:sf6}}
\subfigure[]{%
\includegraphics[scale=0.25]{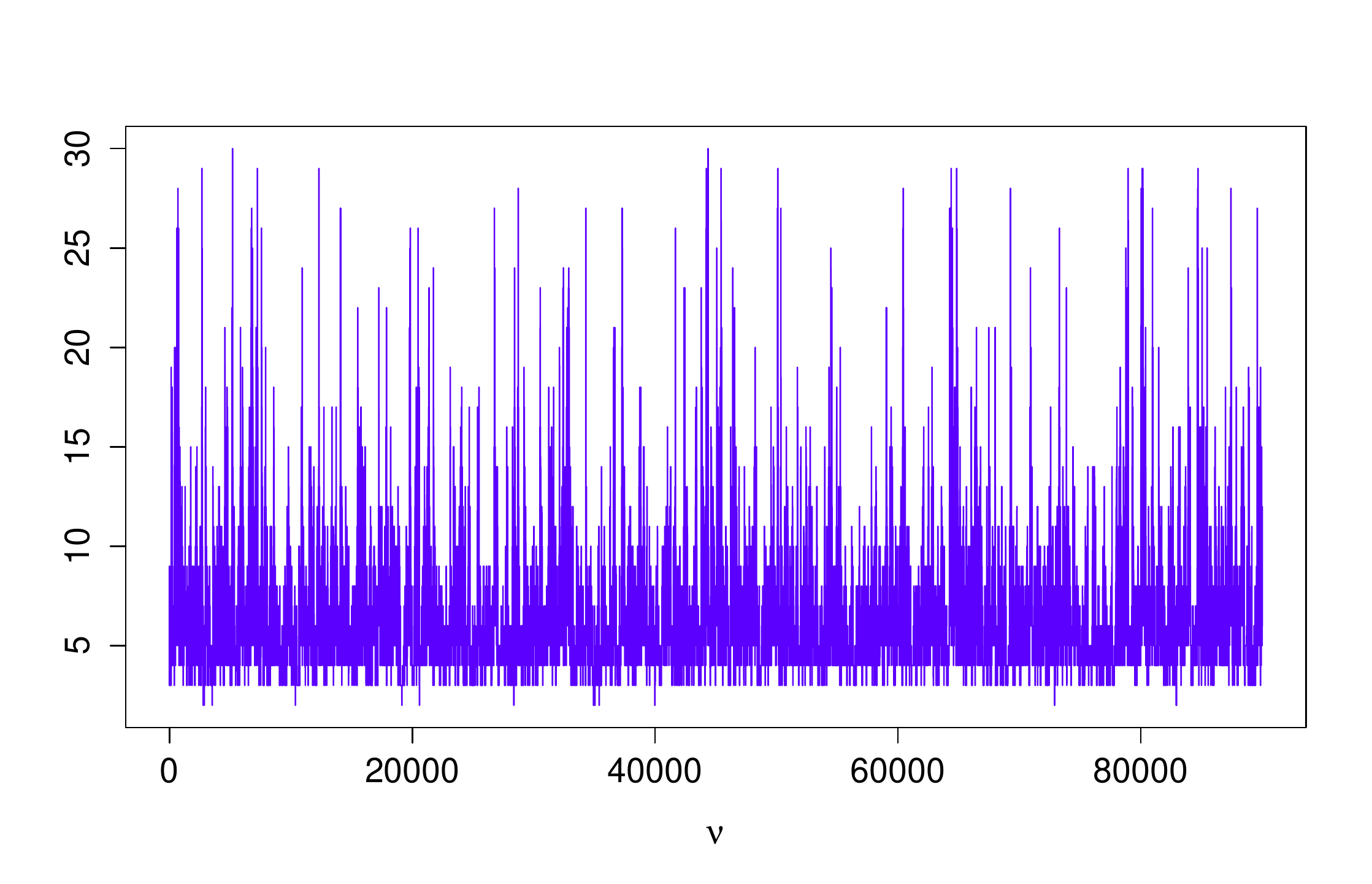}
\label{fig:sf7}}
\quad
\subfigure[]{%
\includegraphics[scale=0.25]{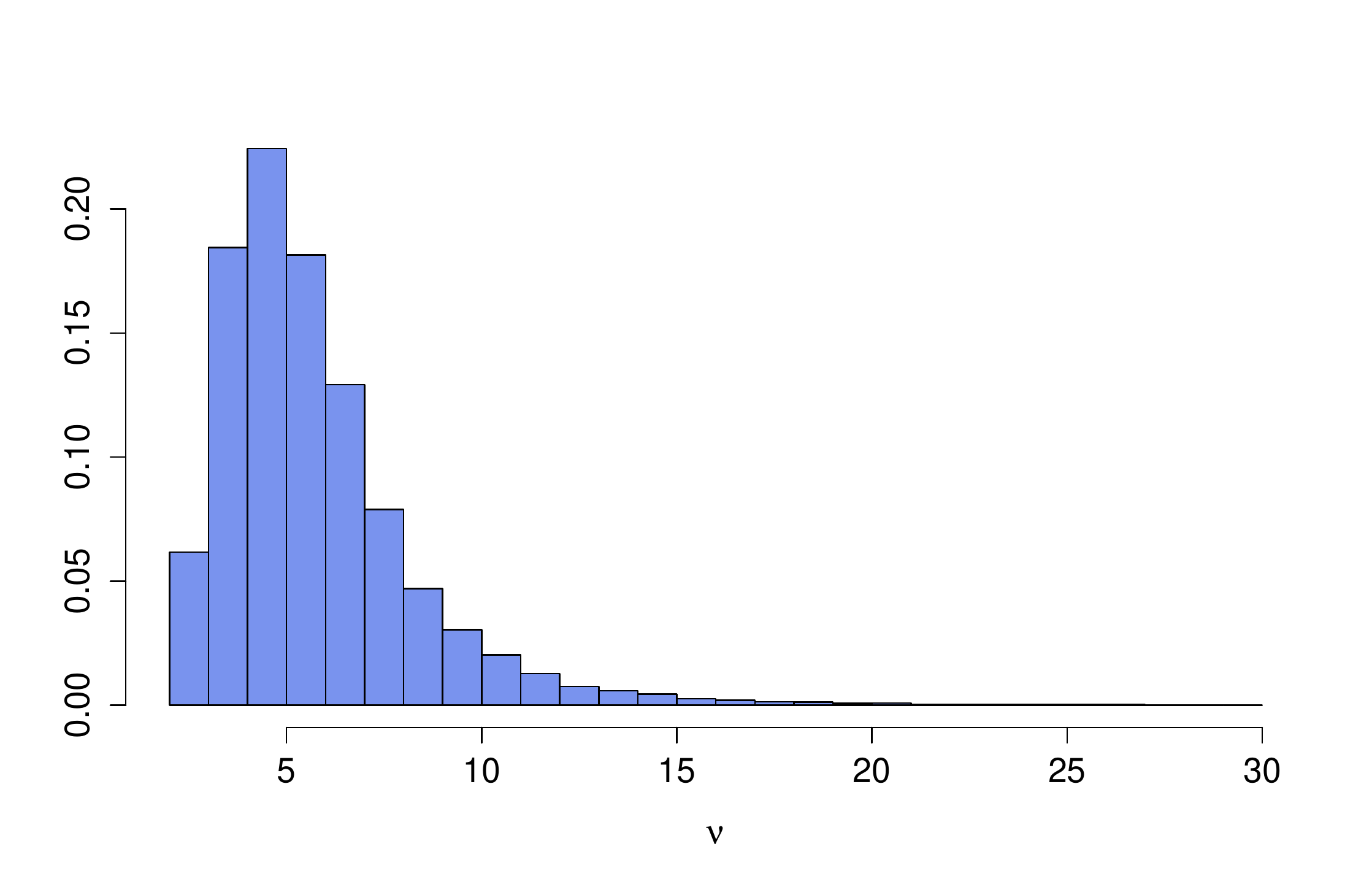}
\label{fig:sf8}}
\caption{Sample chains (left graphs) and histograms of the posterior distributions (right graphs) of the parameters for the simulated data from the AST with $\alpha=0.35$, $\mu=2$, $\sigma=1.5$ and $\nu=6$.}
\label{fig:single_samp}
\end{figure}
while in Table \ref{tab:singlesample} we have the summary statistics of each posterior. In particular, we have computed the posterior mean, the posterior median and the $95\%$ credible interval of the posterior distribution. 

\begin{table}
\centering
\begin{tabular}{cccc}
Parameter & Mean & Median & $95\%$ C.I. \\ 
\hline 
$\alpha$ & 0.36 & 0.36 & (0.25,0.49) \\ 
$\mu$ & 1.76 & 1.76 & (1.02,2.53) \\ 
$\sigma$ & 1.79 & 1.78 & (1.33,2.37) \\ 
$\nu$ & 6.24 & 6 & (3,13) \\ 
\hline 
\end{tabular}
\caption{Summary statistics of the posterior distributions for the parameters of the simulated data from an AST with $\alpha=0.35$, $\mu=2$, $\sigma=1.5$ and $\nu=6$.}
\label{tab:singlesample}
\end{table}
By inspecting the histograms and the summary statistics of the posterior distributions we can assess on the appropriateness of the inferential process. In particular, the mean (or the median for $\nu$) of the posteriors are very close to the true parameter values, which are well within the limits of the corresponding credible intervals.

\section{Real data analysis}\label{sc_real}
To show how the discussed model works in practice, we have chosen two well known data sets, both related to insurance loss. The first data set contains 2,167 individual losses each with a value of one million Danish Krone (DKK) or above, collected from January 1980 to December 1990 \citep{Mcneil:1997}. The second data set relates to 1,500 indemnity payments, in thousand of US dollars \citep{FreVal:1998}.

\subsection{Danish fire losses}\label{sc_danish}
The Danish fire loss data set contains losses due to fire with a single value of at least DKK 1 million. Table \ref{tab:danish_loss1} reports some descriptive statistics of the data, both in the nominal scale and in the log-scale. It is common practice, when analysing insurance data (and not only), to consider the logarithm of the data for modelling purposes as this results in a reduction of skewness \citep{Bolance:2008}. Although the skewness index is drastically reduced by the log-transform of the Danish loss data, as shown in Table \ref{tab:danish_loss1}, its value still indicates a significantly positive skewness. The above result is easily noticeable by inspecting the histograms of the data in Figure \ref{fig:Danish_hist}.
\begin{table}
\centering
\begin{tabular}{lcc}
 & Danish & Danish (log-scale) \\ 
\hline 
Mean & 3.39 & 0.79 \\ 
Standard Deviation & 8.51 & 0.72 \\ 
Skewness & 18.75 & 1.76 \\ 
\hline 
\end{tabular} 
\caption{Descriptive statistics of the Danish loss data set in millions of Danish Krone (left) and in the log-scale (right).}
\label{tab:danish_loss1}
\end{table}
Suitable statistical tests can be performed to support the conclusion of departure for normality, such as the Jarque--Bera test for normality \citep{JarBer:1980}, the D'agostino test for skewness \citep{Dago:1970} or the Anscombe-Glynn test of kurtosis \citep{AnsGly:1983}, for example (results not reported here).
\begin{figure}[h!]
\centering
\subfigure[]{%
\includegraphics[scale=0.30]{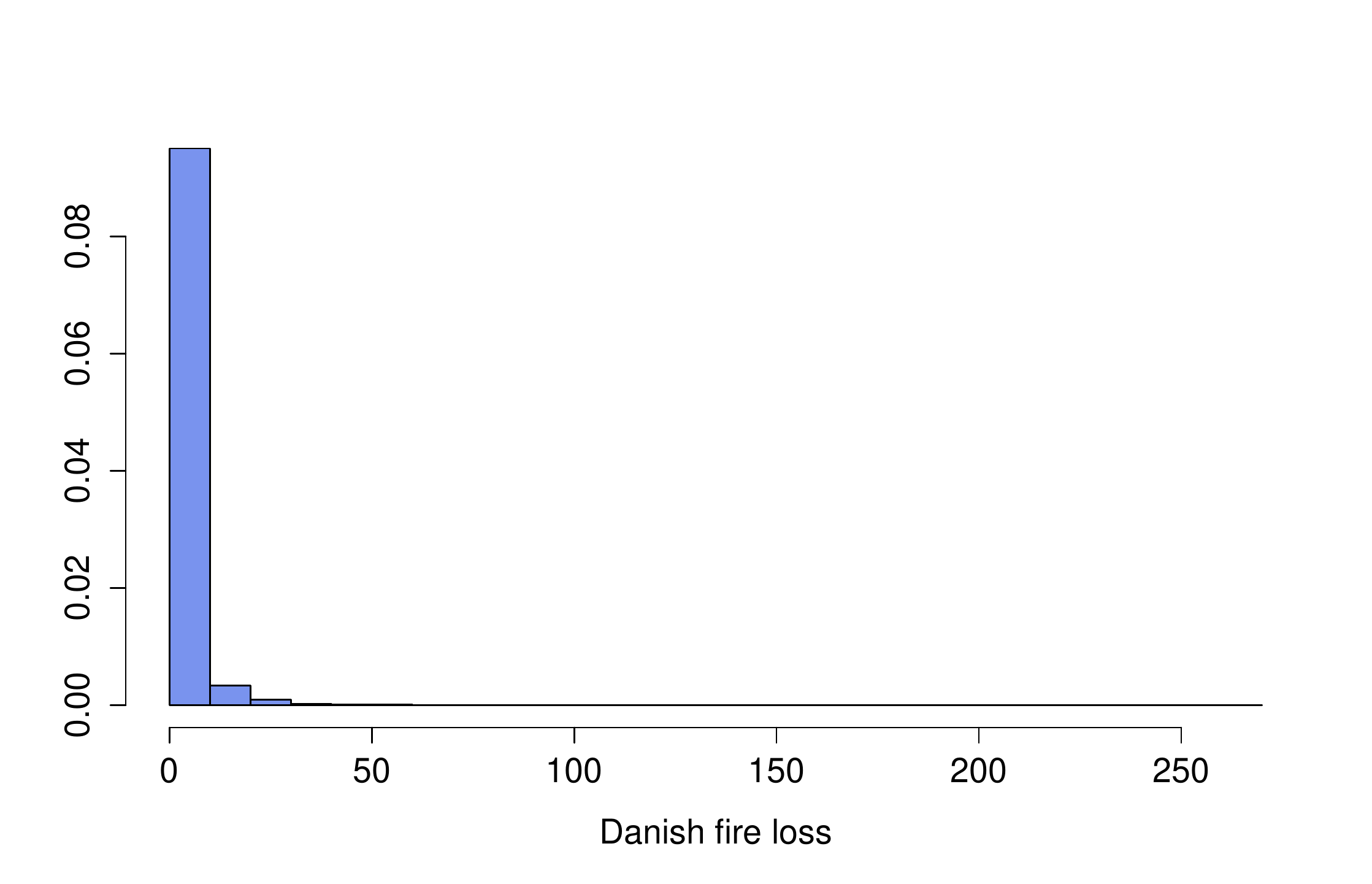}  
\label{fig:sub1}}
\quad
\subfigure[]{%
\includegraphics[scale=0.30]{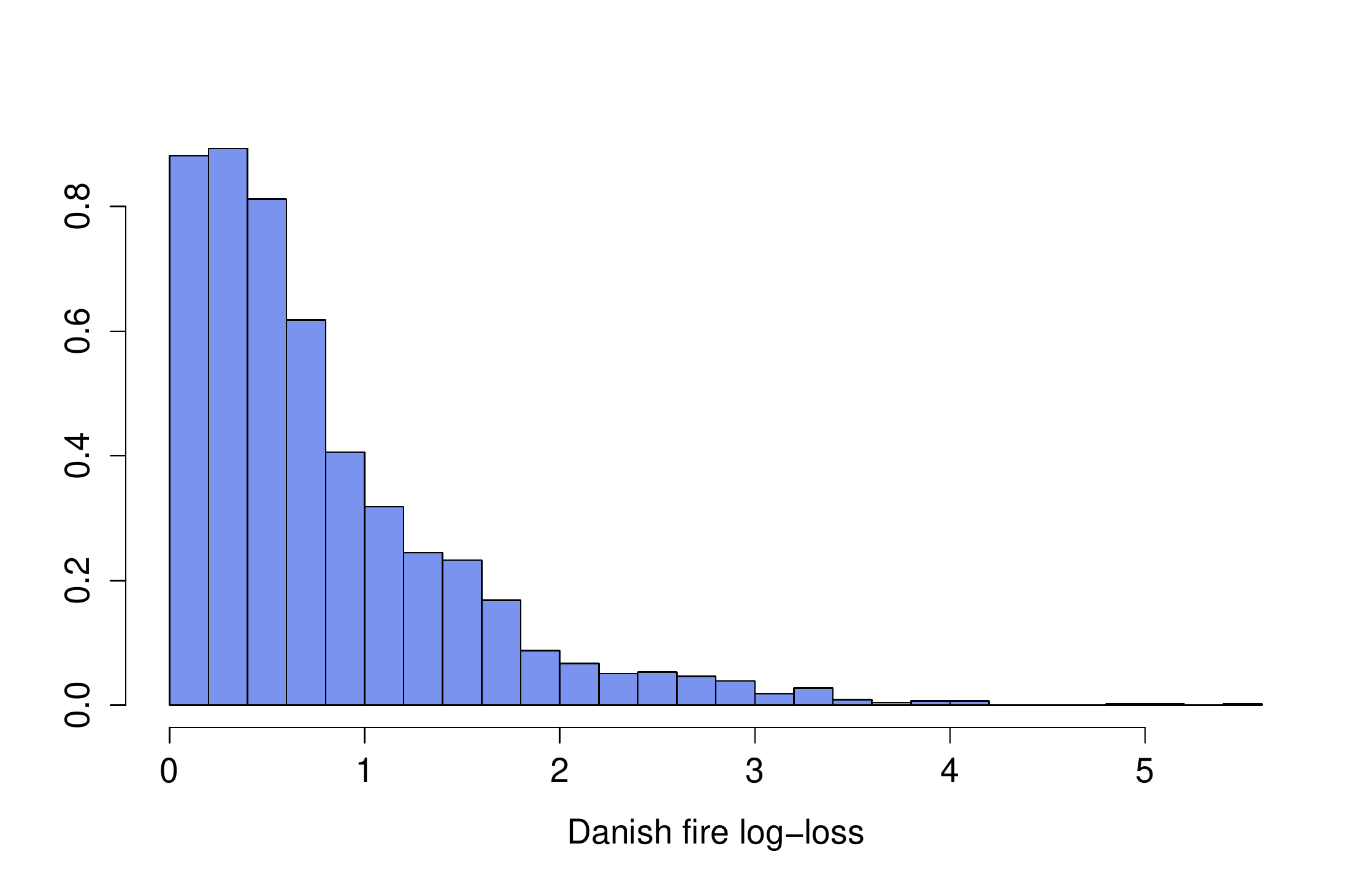} 
\label{fig:sub2}}
\caption{Histograms of the Danish fire loss data (left) and of the same data set in the log-scale (right).}
\label{fig:Danish_hist}
\end{figure}

The prior distributions used to analyse the Danish fire loss data set where in line with the overall objective approach discussed in the paper. In particular, we used the Jeffreys' prior for the skewness parameter $\alpha$, that is $\pi(\alpha)\sim \mbox{Beta}(1/2,1/2)$, the discrete truncated prior for the number of degrees of freedom $\nu$, and the reference prior for the pair location-scale parameters $(\mu,\sigma)$, that is $\pi(\mu,\sigma)\propto 1/\sigma$.

\begin{figure}[h!]
\centering
\includegraphics[scale=0.30]{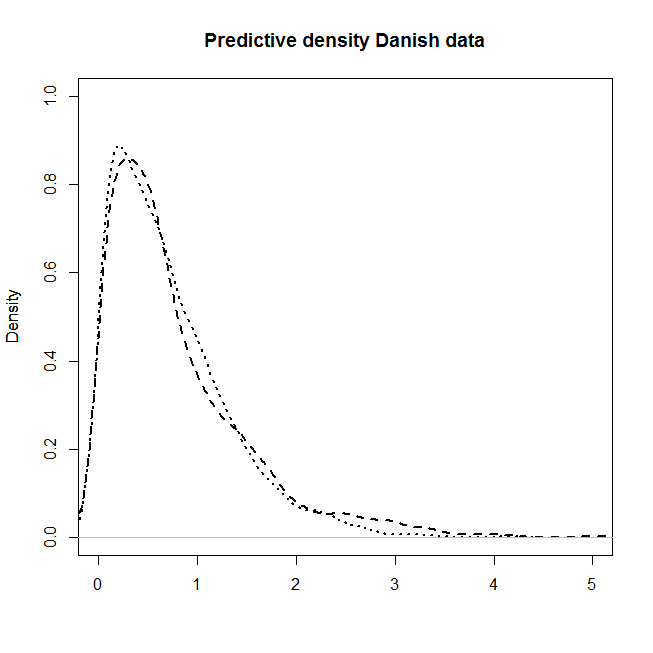}  
\caption{Real Data (dashed line) vs Posterior Predictive Distribution (dotted line) of the Danish Loss data. \label{fig:predDan}}
\end{figure}

The marginal posterior distributions for the parameters, as they are analytically intractable, have been obtained via Monte Carlo methods by applying the algorithm described in the Appendix. We have run multiple chains for each parameter, with different sparse starting points. In particular, we have run 
500000 iterations and considered a burn in of 100000. The convergence has been assessed by computing the Gelman and Rubin's statistics \citep{BroGel:1998,GelRub:1992} and monitoring the posterior running means.
\begin{figure}[h]
\centering
\subfigure[]{%
\includegraphics[scale=0.30]{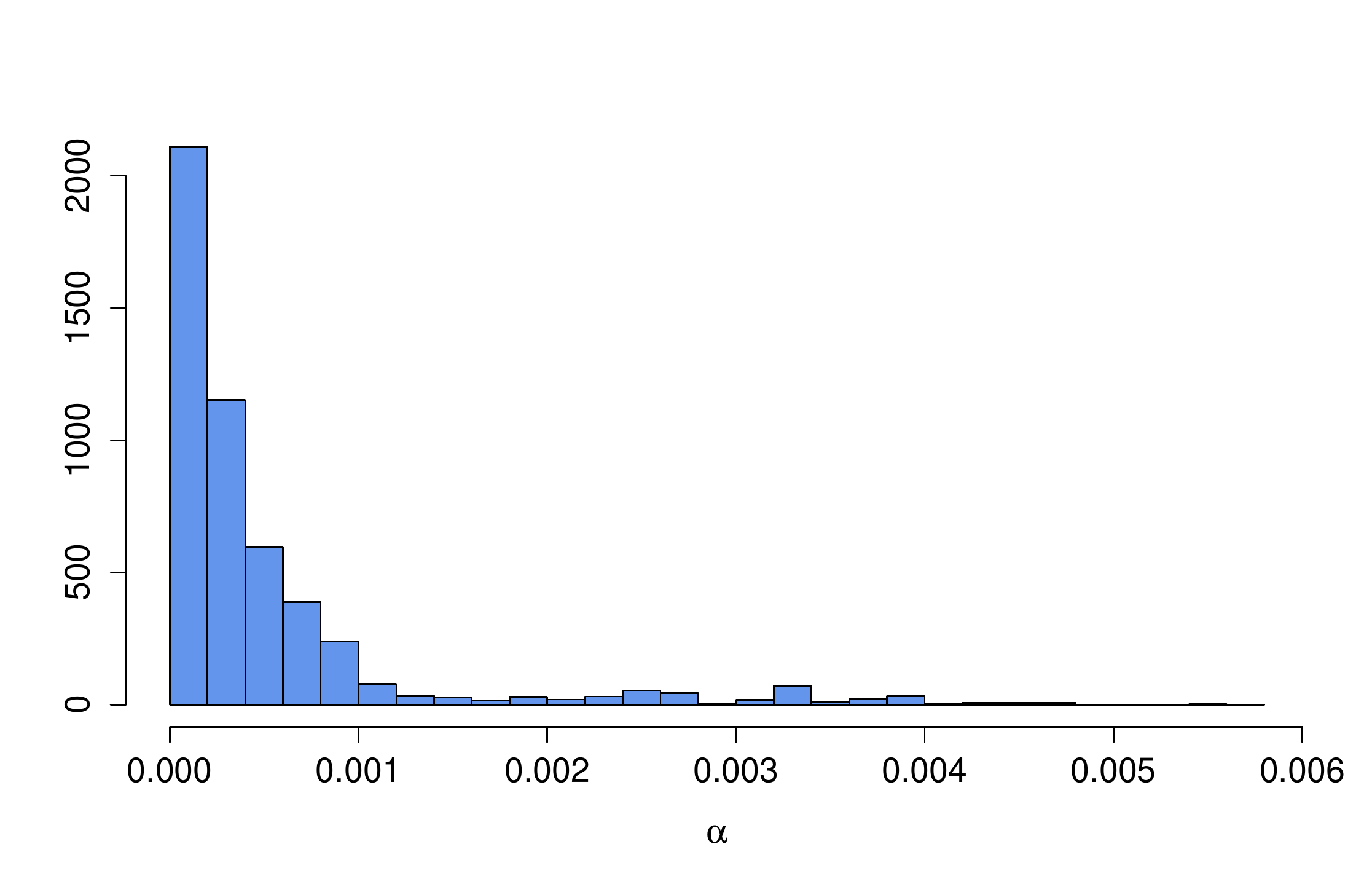} 
\label{fig:dan_hist_sub1}}
\quad
\subfigure[]{%
\includegraphics[scale=0.30]{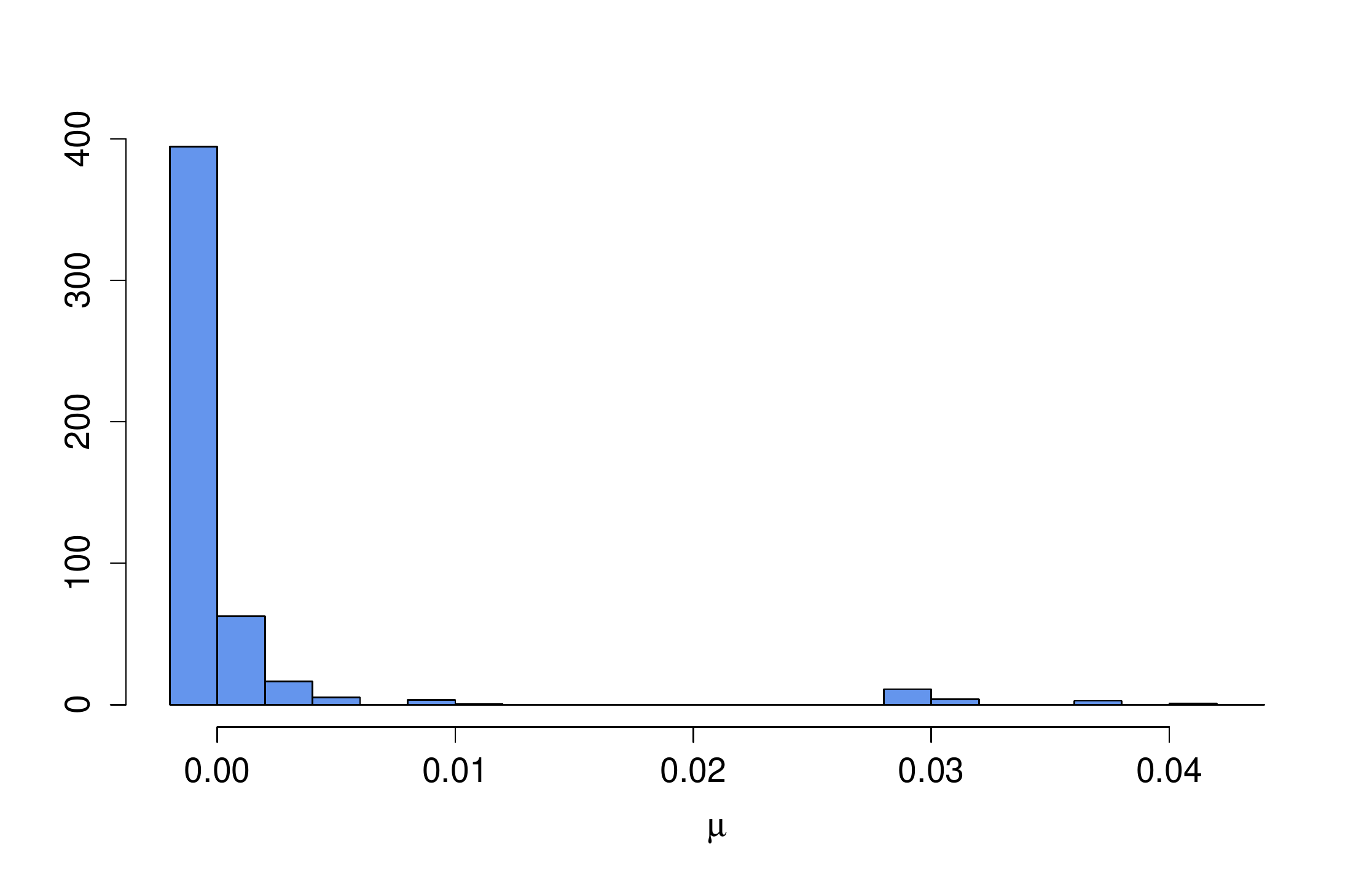} 
\label{fig:dan_hist_sub2}}
\subfigure[]{%
\includegraphics[scale=0.30]{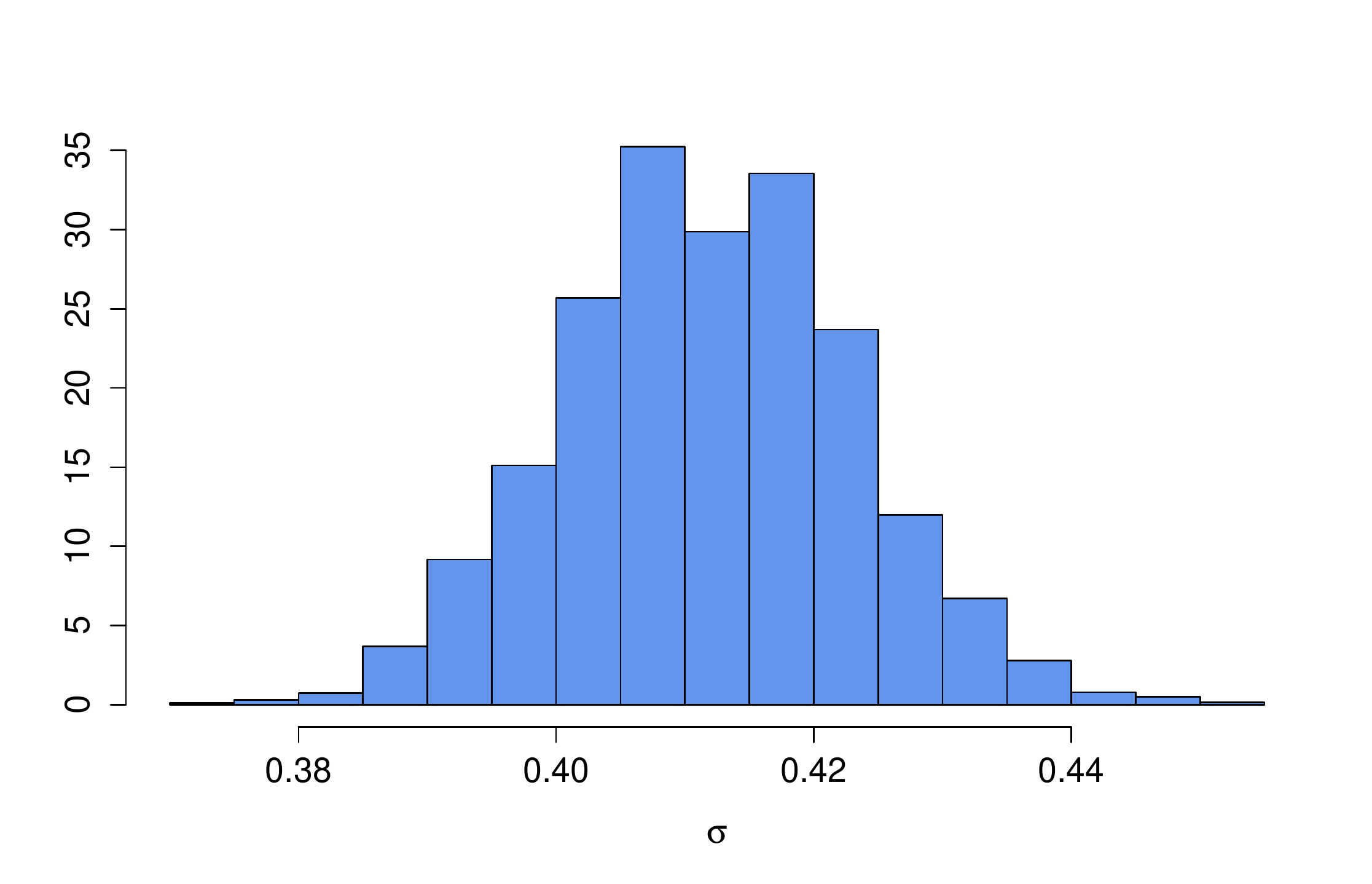}
\label{fig:dan_hist_sub3}}
\quad
\subfigure[]{%
\includegraphics[scale=0.30]{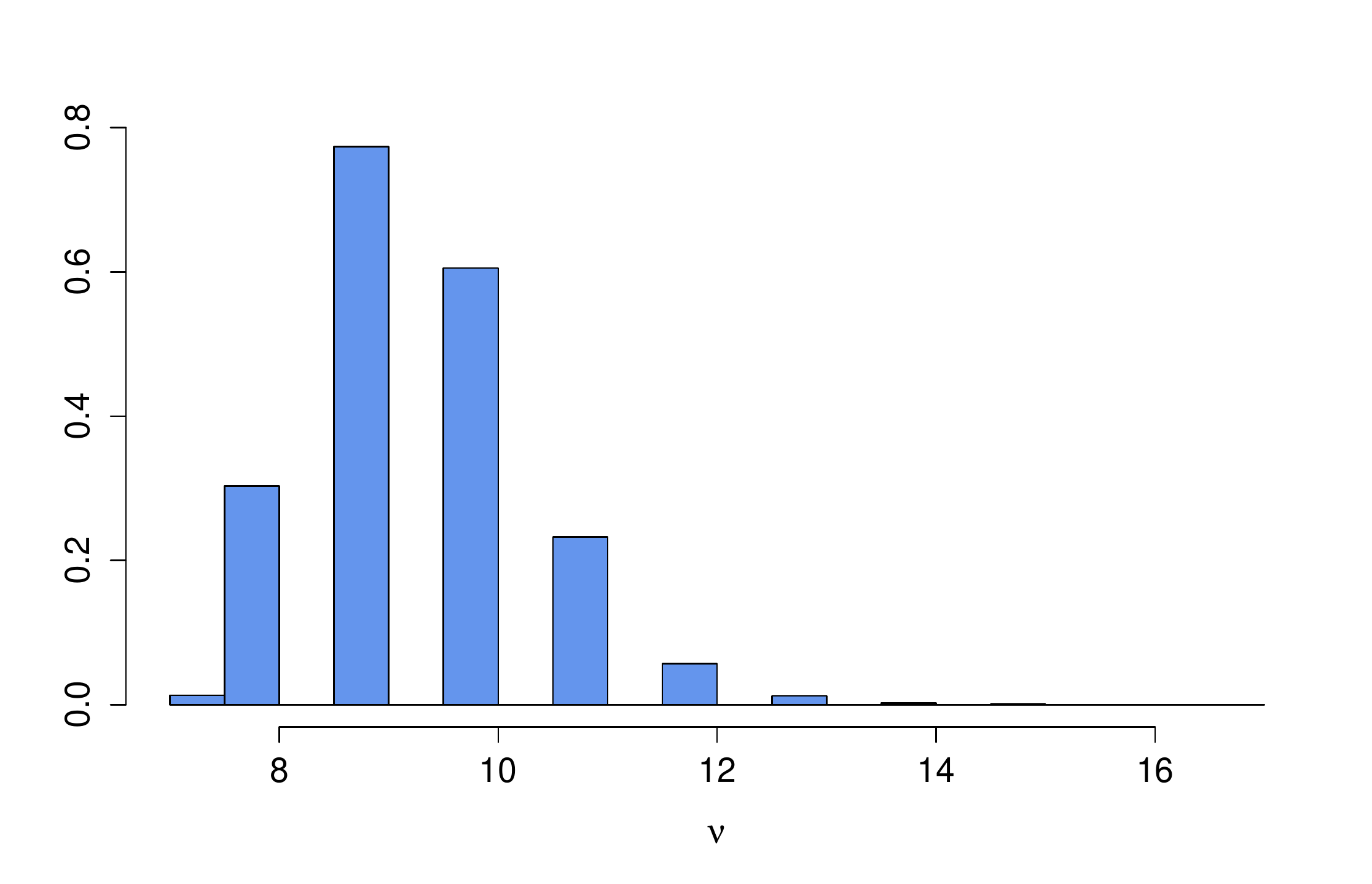}
\label{fig:dan_hist_sub4}}
\caption{Histograms of the posterior distributions for $\alpha$, $\mu$, $\sigma$ and $\nu$ for the Danish fire loss data set (in the log-scale).}
\label{fig:danish_post_hist}
\end{figure}
We have reported the histograms of the posterior marginal distributions for the parameters of the model in Figure \ref{fig:danish_post_hist}, and the corresponding summary statistics in Table \ref{tab:danish_post_summary}.
\begin{table}
\centering
\begin{tabular}{cccc}
\hline 
Parameter & Mean & Median & $95\%$ C.I. \\ 
\hline 
$\alpha$ & 0.0005 & 0.0003 & (0.0001,0.0020) \\ 
$\mu$ & 0.0042 & 0.0004 & (0.0000,0.0378) \\ 
$\sigma$ & 0.4142 & 0.4123 & (0.3906,0.4359) \\ 
$\nu$ & 9.4900 & 9 & (8,12) \\ 
\hline 
\end{tabular} 
\caption{Summary statistics of the posterior distribution of the Danish fire loss data set in the log-scale.}
\label{tab:danish_post_summary}
\end{table}
As expected, the value of the skewness parameter is very close to zero. In fact, both from the data histogram and summary statistics, it is possible to deduce that the data has a strong positive skewness. The median of the posterior of the number of degrees of freedom $\nu$ indicates a heavy-tailed behaviour in the observations. Both the skewness and heavy-tail results are consistent with the expected behaviour of insurance loss data, even after the data has been log-transformed, in this case. Figure \ref{fig:predDan} shows the posterior predictive density against the real data. We note no substantial difference in the two curves. Finally, to compare the data statistics of Table \ref{tab:danish_loss1} with the MCMC estimation, we have computed the Monte Carlo estimates (in the log-scale) of the mean, the standard deviation and the skewness index, obtaining, respectively, the values 0.79, 0.72, 1.77.

\subsection{US indemnity loss}\label{sc_US}
The second data set we analyse is widely used in the literature and it contains US indemnity losses publicly available \citep{FreVal:1998}. It contains 1,500 liability claims each of which with an associated indemnity payment in thousands of US dollars (USD).

\begin{table}
\centering
\begin{tabular}{lcc}
 & US & US (log-scale) \\ 
\hline 
Mean & 41.21 & 2.46 \\ 
Standard Deviation & 102.75 & 1.64 \\ 
Skewness & 9.16 & -0.15 \\ 
\hline 
\end{tabular} 
\caption{Descriptive statistics of the US loss data set in thousands of USD (left) and in the log-scale (right).}
\label{tab:usloss1}
\end{table}

Table \ref{tab:usloss1} shows the descriptive statistics of the US loss data set, both in thousands of USD and in the log-scale. As we did for the Danish fire loss data, we perform the analysis on the log-transformation of the observed values. Figure \ref{fig:US_hist} shows the histogram of the US loss data (left plot) and of the same data in the log-scale (right plot). From the first histogram on the left it is possible to see the typical behaviour of this type of data, that is a relatively high number of losses with a small value and a few losses of large value.
\begin{figure}[ht]
\centering
\subfigure[]{%
\includegraphics[scale=0.30]{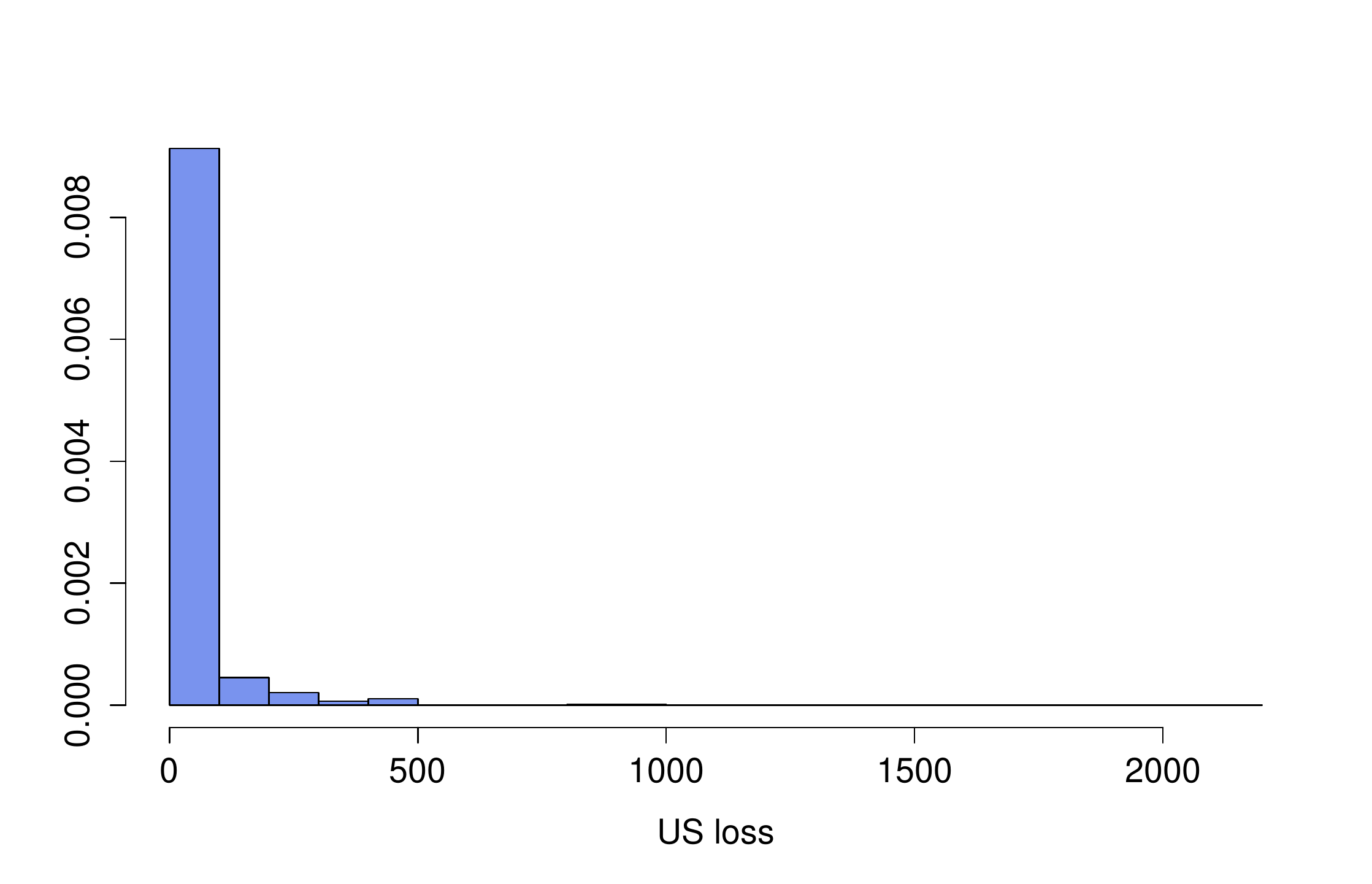}  
\label{fig:subfigur1}}
\quad
\subfigure[]{%
\includegraphics[scale=0.30]{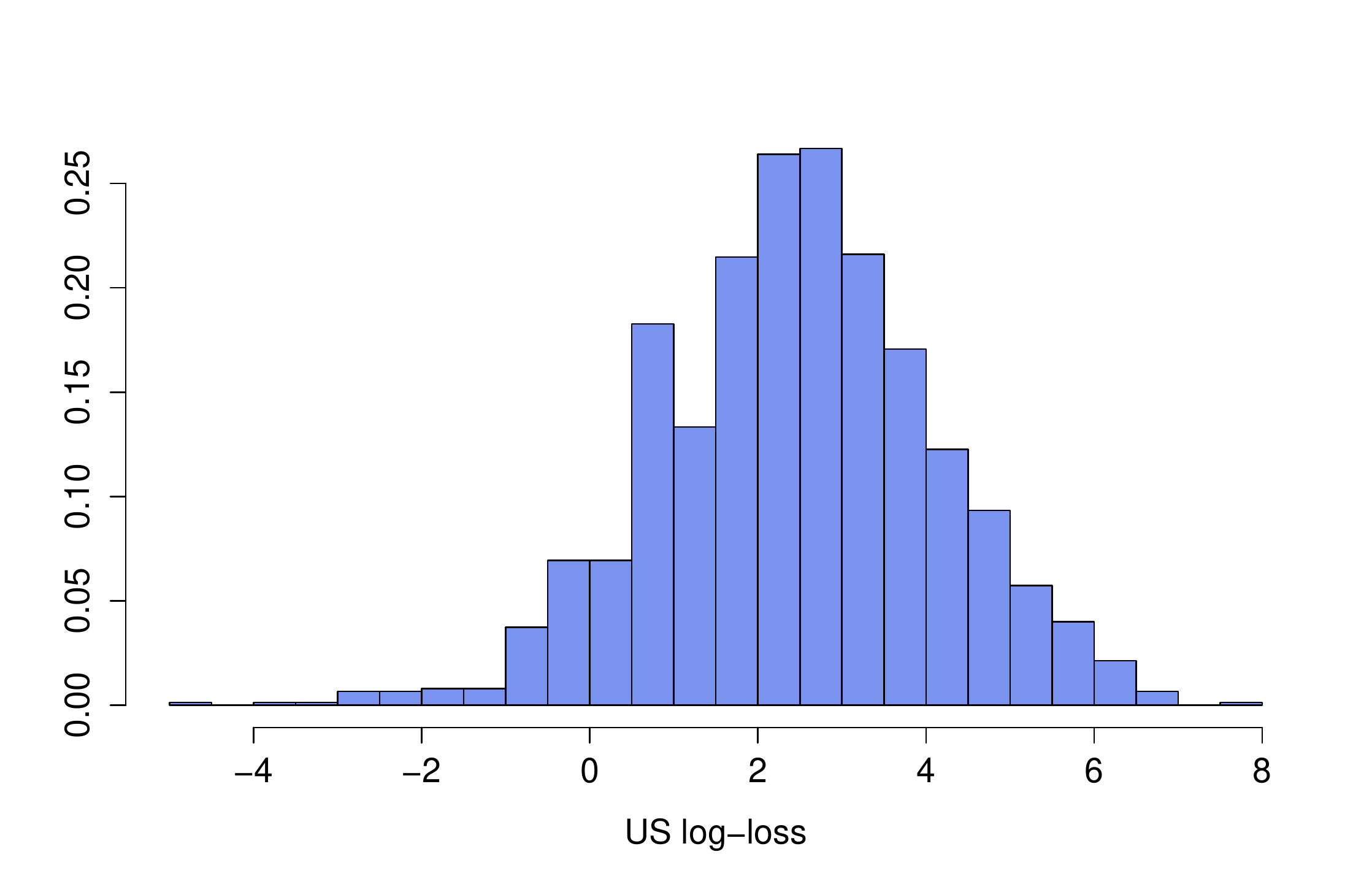} 
\label{fig:subfigur2}}
\caption{Histograms of the US loss data (left) and of the same data set in the log-scale (right).}
\label{fig:US_hist}
\end{figure}
The procedure followed to analyse the US loss data, in the log-scale, is analogous to the one employed to analyse the Danish fire loss data set in Section \ref{sc_danish}. The histograms of the posterior distributions of the parameters of the model are in Figure \ref{fig:US_post_hist}, with the corresponding summary statistics in Table \ref{tab:US_post_summary}.
\begin{figure}[h]
\centering
\subfigure[]{%
\includegraphics[scale=0.30]{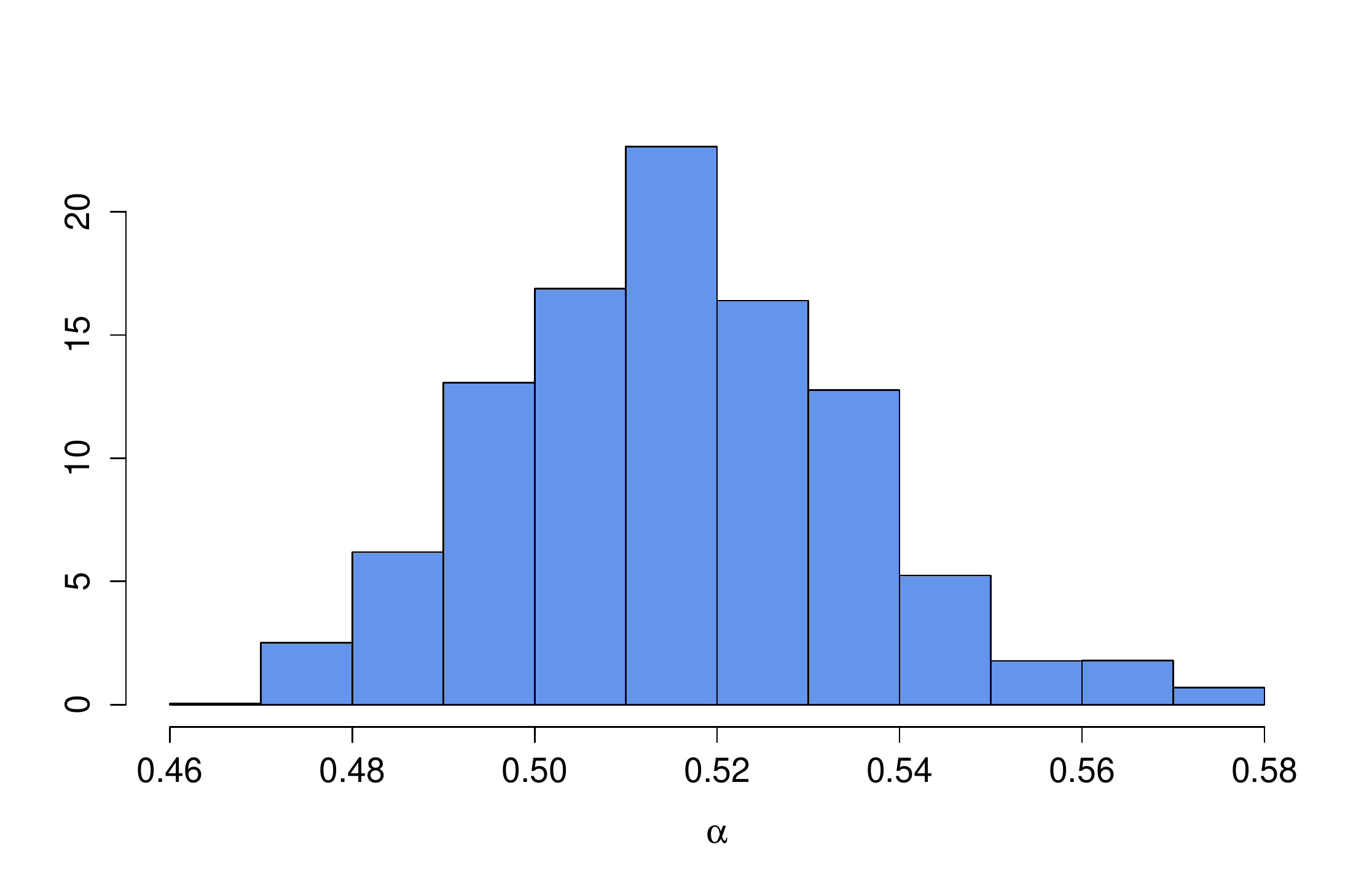} 
\label{fig:US_hist_sub1}}
\quad
\subfigure[]{%
\includegraphics[scale=0.30]{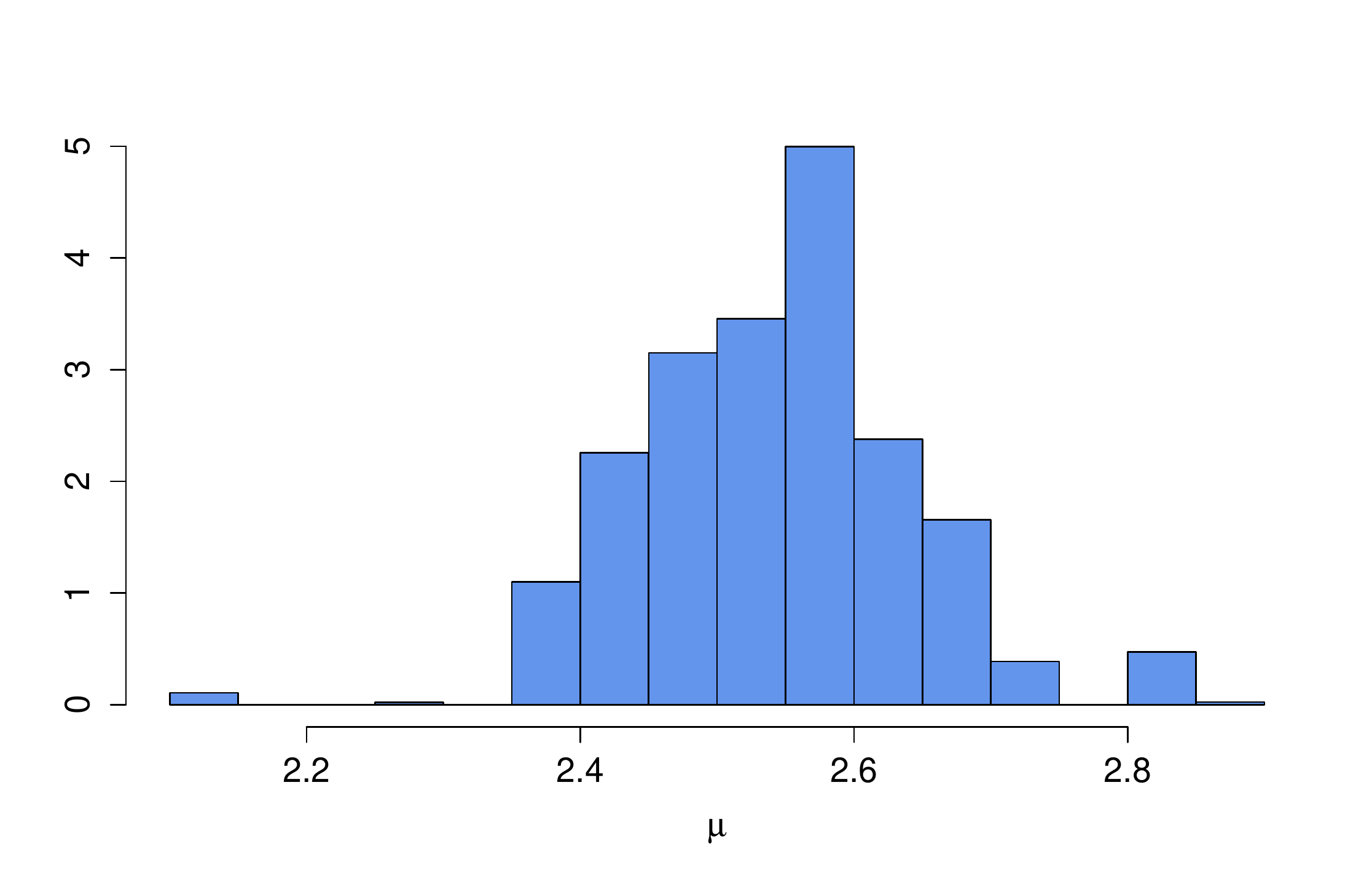} 
\label{fig:US_hist_sub2}}
\subfigure[]{%
\includegraphics[scale=0.30]{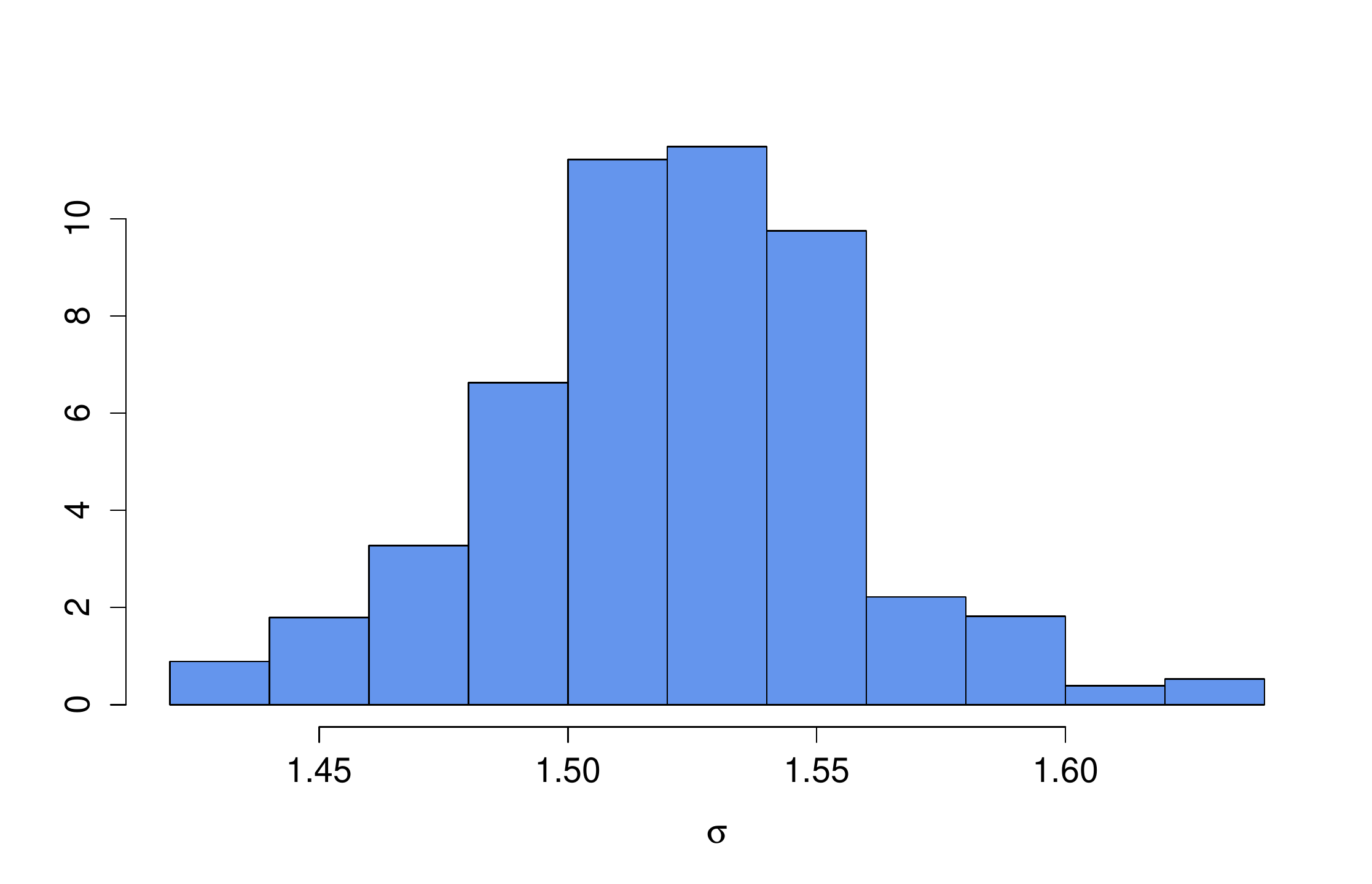}
\label{fig:US_hist_sub3}}
\quad
\subfigure[]{%
\includegraphics[scale=0.30]{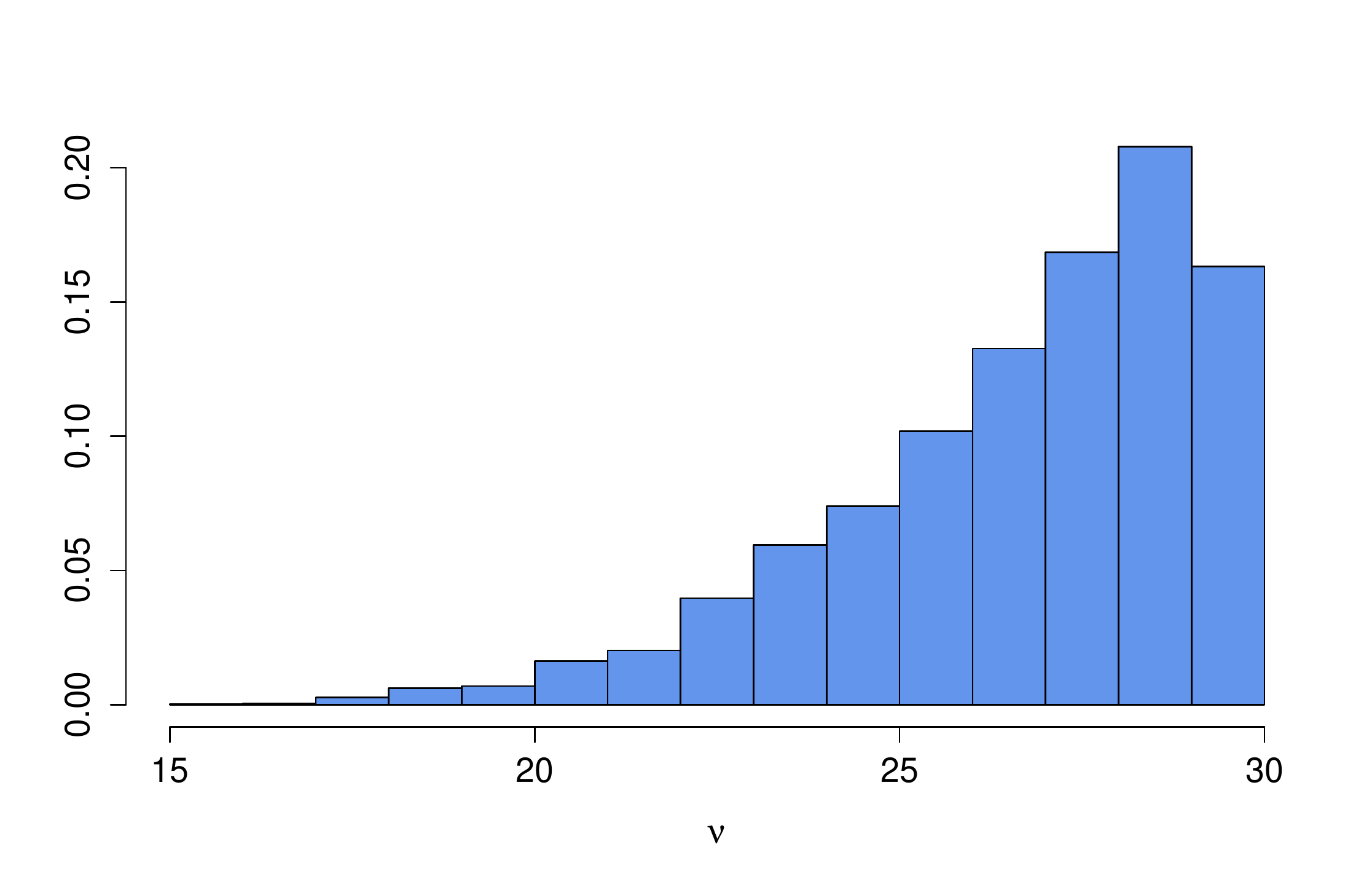}
\label{fig:US_hist_sub4}}
\caption{Histograms of the posterior distributions for $\alpha$, $\mu$, $\sigma$ and $\nu$ for the US loss data set (in the log-scale).}
\label{fig:US_post_hist}
\end{figure}
\begin{table}
\centering
\begin{tabular}{cccc}
\hline 
Parameters & Mean & Median & $95\%$ C.I. \\ 
\hline 
$\alpha$ & 0.52 & 0.52 & (0.48,0.56) \\ 
$\mu$ & 2.47 & 2.48 & (2.01,2.65) \\ 
$\sigma$ & 1.52 & 1.52 & (1.44,1.59) \\ 
$\nu$ & 27.16 & 28 & (21,30) \\ 
\hline 
\end{tabular} 
\caption{Summary statistics of the posterior distribution of the US loss data set (in the log-scale).}
\label{tab:US_post_summary}
\end{table}
We note the following two important results. First, the skewness parameter is estimated to be close to 0.5. This indicates that a symmetric model would be suitable for this data set. Second, the estimated number of degrees of freedom is very closed to the upper bound of the parameter space. This is a clear indication that the data could be represented by a $t$ density with a high number of degrees of freedom or, which is equivalent, by a normal density.

For what it concerns the inferential results, we see that the credible intervals are relatively narrow, indicating a relatively strong posterior beliefs about the obtained estimates. Figure \ref{fig:predUS} shows the posterior predictive density against the real data, where again we do not see any substantial difference between the two curves. As done in Section \ref{sc_danish}, we have performed Monte Carlo estimates of the data statistics fro the US Loss data set. In particular, we have computed the mean, the standard deviation and the skewness index obtaining, respectively, 2.45, 1.64, -0.15. These values can be compared with the ones in Table \ref{tab:usloss1}.

\begin{figure}[h!]
\centering
\includegraphics[scale=0.30]{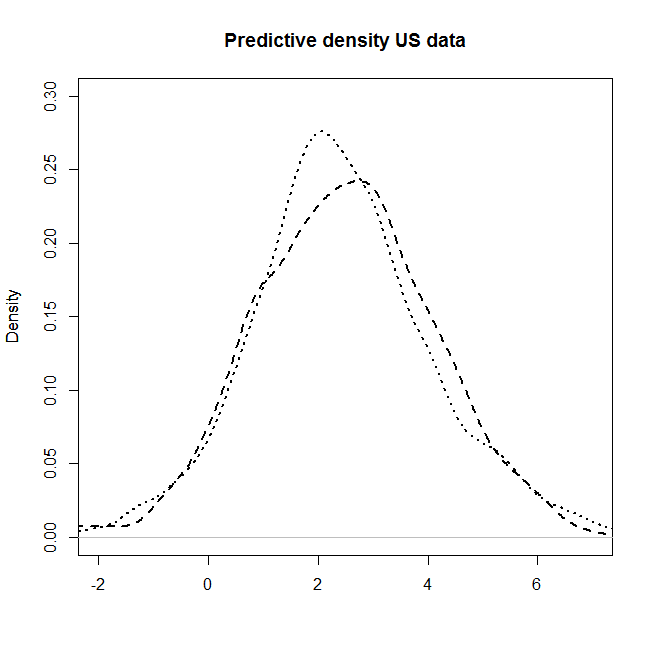}  
\caption{Real Data (dashed line) vs Posterior Predictive Distribution (dotted line) of the US Loss data. \label{fig:predUS}}
\end{figure}

\section{Discussion}\label{sc_disc}
An important aspect in analysis insurance loss data is their tendency to follow a skewed distribution and, because extreme events are not uncommon, to exhibit heavy tails. However, in some circumstances, symmetrical models with either non-heavy tails, like for example the normal distribution, or heavy tails, such as the $t$ density, may be effectively employed. Usually, this type of modelling can be achieved by considering the data in the log-scale. Although competing models could be estimated and assessed for their inferential and predictive performances, it is appealing to be able to consider a single model which, on the basis of the estimated values of some of the parameters, ``adjusts'' itself, in a sort of an automated fashion, to the problem under consideration. The asymmetrical Student-$t$ considered in this paper to represent insurance loss data has the above flavour. In fact, the estimated value of the skewness parameter $\alpha$ would suggest a symmetrical or non-symmetrical scenario, and the number of degrees of freedom would indicate the fatness of the tails of the data.

When dealing with a $t$ distribution, whether the usual symmetrical one or the one considered here, the estimation of the number of degrees of freedom has always been a challenge. It is not uncommon that the problem is somehow eluded by either setting $\nu$ on the basis of some appropriate theoretical results, or by consider different values and chose the most appropriate on the basis of some criteria. A Bayesian approach, and in particular an objective Bayesian approach, allows to obtain reliable estimates of the number of degrees of freedom with minimal initial input. We have here consider the number of degrees of freedom discrete and bounded above on the basis of the well known property of the $t$ density to converge to the normal distribution for sufficiently large $\nu$. With this consideration, we have been able to consider the objective prior for $\nu$ presented in \cite{Villa:Walker:2014a}. In addition, we prove here that the prior is not dependent on the value of the skewness parameter $\alpha$, therefore directly applicable to the model here considered.

We have studied the frequentist properties of the posterior for $\nu$ yielded by the truncated objective prior. As expected from the analytical result discussed in Section \ref{sc_priors}, different values of the skewness parameter do not affect the performance of the prior, except regular small variations due to the randomness of the repeating sampling. 
As expected, the performance of the prior distribution, in terms of MSE, improves when the sample size increases. The above results are in line with the ones obtained in \cite{Villa:Walker:2014a}.

We have then employed the discussed model to analyse real insurance loss data. The work has been carried out by using objective priors for each parameter of the model, to ensure a minimal information approach to the problem. In the first illustration we look to a well-known data set of losses due to fire in Denmark. The peculiarity of the data is to show strong (positive) skewness and to have extreme values. Even by considering the logarithmic transformation of the observations skewness and extremeness cannot be removed. The inferential procedure shows the model adjustment to the scenario by resulting in an $\alpha$ very close to zero and a number of degrees of freedom equal to 9.

The same Bayesian model, i.e. sampling distribution and prior distribution, is applied to a different data set. This data set, which contains indemnity losses in the US market, shows skewness and extreme values as well, but these appear to be removed once the logarithm of the observations is considered. A posterior mean of $\alpha=0.5$ indicates a symmetrical distribution, and a posterior median of $\nu=28$ indicates a distribution that is not very different from a normal density.

The above two results show how the same model can be applied to insurance loss data sets, without the need to change neither any of of its components nor the prior distributions for the parameters.

To conclude, it is obvious that the same approach here illustrated can be adopted to other types of data, which exhibit similar characteristics of skewness. That is, the objective Bayesian analysis of data by means of the AST model, including both the distribution and the priors for the parameters, can be generalised and employed in other disciplines, such as finance, environmental sciences and engineering, for example.

\section*{Acknowledgements}
The authors are thankful to the Associate Editor and the anonymous reviewer for their useful comments which significantly improved the quality of the paper. The authors are very grateful to F. J. Rubio for all the stimulating discussions and suggestions. Fabrizio Leisen's research has been supported by the European Community's Seventh  Framework Programme[FP7/2007-2013] under grant agreement no: 630677.


\renewcommand{\theequation}{A-\arabic{equation}}
\setcounter{equation}{0}  
\section*{APPENDIX - Monte Carlo Algorithm}\label{sc_app1}  

To sample from the marginal posterior distribution for each parameter, as they were analytically intractable, we had to use Monte Carlo methods. In particular, we have implemented a Gibbs sampler where we define a Metropolis-Hastings proposal at each step for every parameter. The details of the algorithm are given below.

\noindent
At a given iteration $s$ the parameters are updated as follows.

\begin{enumerate}
\item[1.] Parameter $\nu$.

The proposal transition kernel for $\nu $ is a discrete uniform distribution (DU) defined between 1 and 30
\begin{equation*}
\nu ^{\ast }\sim \mathrm{DU}(1,30).
\end{equation*}
Thus $\nu ^{(s+1)}=\nu ^{\ast }$ with probability $\alpha _{\nu },$ where
\begin{equation*}
\alpha _{\nu }=\min \left[ 1,\frac{\pi \left( \Theta ^{\ast }|\mathbf{x}\right) }{\pi \left( \widetilde{\Theta }|\mathbf{x}\right) }\right],
\end{equation*}%
with $\Theta ^{\ast }=(\nu ^{\ast },\mu ^{(s)},\sigma ^{(s)},\alpha ^{(s)})$ and with $\widetilde{\Theta }=(\nu ^{(s)},\mu ^{(s)},\sigma ^{(s)},\alpha^{(s)}). $

\item[2.] Parameter $\mu$.

The proposal transition kernel for $\mu $ is a normal distribution with mean $\mu^{(s)}$ and variance $s_\mu$
\begin{equation*}
\mu ^{\ast }|\mu ^{(s)}\sim N(\mu ^{(s)},s_{\mu }),
\end{equation*}%
where $s_{\mu }$ is fixed such that the mixing of the chains is optimal.
Thus, $\mu ^{(s+1)}=\mu ^{\ast }$ with probability $\alpha _{\mu },$ where
\begin{equation*}
\alpha _{\mu }=\min \left[ 1,\frac{\pi \left( \Theta ^{\ast }|\mathbf{x}\right) \phi (\mu ^{(s)}|\mu ^{\ast },s_{\mu })}{\pi \left( \widetilde{\Theta }|\mathbf{x}\right) \phi (\mu ^{\ast }|\mu ^{(s)},s_{\mu })}\right],
\end{equation*}%
with $\Theta ^{\ast }=(\nu ^{(s)},\mu ^{\ast },\sigma ^{(s)},\alpha ^{(s)})$ and with $\widetilde{\Theta }=(\nu ^{(s)},\mu ^{(s)},\sigma ^{(s)},\alpha ^{(s)}).
$

\item[3.] Parameter $\sigma$.

The proposal transition kernel for $\sigma $ is given by a gamma distribution with parameters $a_\sigma$ and $b_\sigma$
\begin{equation*}
\sigma ^{\ast }|\sigma ^{(s)}\sim \mathrm{Ga}(a_{\sigma },b_{\sigma }),
\end{equation*}%
where $a_{\sigma }$ and $b_{\sigma }$ are appropriately chosen to obtain optimal mixing of the chains.
Thus, $\sigma ^{(s+1)}=\sigma ^{\ast }$ with probability $\alpha _{\sigma },$ where
\begin{equation*}
\alpha _{\mu }=\min \left[ 1,\frac{\pi \left( \Theta ^{\ast }|\mathbf{x}\right) f_{G}(\sigma ^{(s)}|a_{\sigma },b_{\sigma })}{\pi \left( \widetilde{\Theta }|\mathbf{x}\right) f_{G}(\sigma ^{\ast }|a_{\sigma },b_{\sigma })}\right] ,
\end{equation*}%
with $\Theta ^{\ast }=(\nu ^{(s)},\mu ^{(s)},\sigma ^{\ast },\alpha ^{(s)})$ and with $\widetilde{\Theta }=(\nu ^{(s)},\mu ^{(s)},\sigma ^{(s)},\alpha ^{(s)}).
$

\item[4.] Parameter $\alpha$.

The proposal transition kernel for $\alpha $ is given by a beta distribution with parameters $a_\alpha^{(s)}$ and $b_\alpha^{(s)}$
\begin{equation*}
\alpha ^{\ast }|\alpha ^{(s)}\sim \mathrm{Be}(a_{\alpha }^{(s)},b_{\alpha}^{(s)}),
\end{equation*}%
where $a_{\alpha }^{(s)}$ and $b_{\alpha }^{(s)}$ are chosen in order that the mean of the beta distribution is $\alpha ^{(s)}$. Thus, if we indicate by $V_\alpha$ the variance of the beta kernel, we have 
\begin{eqnarray*}
a_{\alpha }^{(s)} &=&\left( ( 1-\alpha ^{(s)}) /V_{\alpha}-1/\alpha ^{(s)}\right) \cdot \left( \alpha ^{(s)}\right) ^{2} \\
b_{\alpha }^{(s)} &=&a_{\alpha }\left( 1/\alpha ^{(s)}-1\right),
\end{eqnarray*}%
with $V_{\alpha }$ is selected to obtain optimal mixing of the chains.
Thus, $\alpha ^{(s+1)}=\alpha ^{\ast }$ with probability $\alpha _{\alpha },$ where
\begin{equation*}
\alpha _{\alpha }=\min \left[ 1,\frac{\pi \left( \Theta ^{\ast }|\mathbf{x}\right) f_{Be}(\alpha ^{(s)}|\alpha ^{\ast },V_{\alpha })}{\pi \left(\widetilde{\Theta }|\mathbf{x}\right) f_{Be}(\alpha ^{\ast }|\alpha^{(s)},V_{\alpha })}\right],
\end{equation*}%
with $\Theta^{\ast }=(\nu ^{(s)},\mu ^{(s)},\sigma ^{(s)},\alpha ^{\ast })$ and with $\widetilde{\Theta }=(\nu ^{(s)},\mu ^{(s)},\sigma ^{(s)},\alpha ^{(s)}).$
\end{enumerate}

\end{document}